\newif\ifisarxiv
\newif\ifislipics
\newif\ifislncs
\DeclareMathOperator*{\argmax}{arg\,max}
\newcommand{\Var}{\textnormal{Var}}
\newcommand{\PPAD}{\textup{PPAD}}
\newcommand{\PLS}{\textup{PLS}}
\newcommand{\CLS}{\textup{CLS}}
\newcommand{\RP}{\textup{RP}}
\newcommand{\Ptime}{\textup{P}}
\newcommand{\veca}{\textbf{\textup{a}}}
\newcommand{\vecu}{\textbf{\textup{u}}}
\newcommand{\melPLPG}{(m,\epsilon,\lambda)\textnormal{-\sc{PureLPG}}}
\newcommand{\melLPG}{(m,\epsilon,\lambda)\textnormal{-\sc{MixedLPG}}}
\newcommand{\PLPG}[3]{\ensuremath{(#1,#2,#3)\textnormal{-\sc{PureLPG}}}}
\newcommand{\LPG}[3]{\ensuremath{(#1,#2,#3)\textnormal{-\sc{MixedLPG}}}}
\newcommand{\E}{\mathbb{E}}
\newcommand{\R}{\mathbb{R}}
\newcommand{\N}{\mathbb{N}}
\newcommand{\EotL}{\textnormal{\sc{End of Line}}}
\newcommand{\PP}{\textup{P}}
\newcommand{\NP}{\textup{NP}}
\newcommand{\coNP}{\textup{co-NP}}
\newcommand{\vecp}{\textbf{\textup{p}}}
\newcommand{\reg}{\textnormal{reg}}
\newcommand{\poly}{\textnormal{\textup{poly}}}
\newcommand{\vece}{\textbf{\textup{e}}}
\newcommand{\veco}{\boldsymbol{1}}
\newcommand{\vecc}{\textbf{\textup{c}}}
\newcommand{\matL}{\textbf{\textup{L}}}
\newcommand{\vecb}{\textbf{\textup{b}}}
\newcommand{\BPP}{\textup{BPP}}
\newcommand{\br}{\textnormal{br}}
\title{PPAD-Complete Pure Approximate Nash Equilibria in Lipschitz Games}
\author{Paul W. Goldberg}{Department of Computer Science, University of Oxford, United Kingdom\and \url{http://www.cs.ox.ac.uk/people/paul.goldberg/index1.html} }{paul.goldberg@cs.ox.ac.uk}{https://orcid.org/
0000-0002-5436-7890}{}
\author{Matthew J. Katzman}{Department of Computer Science, University of Oxford, United Kingdom\and \url{https://www.cs.ox.ac.uk/people/matthew.katzman/} }{matthew.katzman@cs.ox.ac.uk}{https://orcid.org/
0000-0001-8147-9110}{Supported by an Oxford-DeepMind Studentship}%TODO mandatory, please use full name; only 1 author per \author macro; first two parameters are mandatory, other parameters can be empty. Please provide at least the name of the affiliation and the country. The full address is optional. Use additional curly braces to indicate the correct name splitting when the last name consists of multiple name parts.
\authorrunning{P.\,W. Goldberg and M.\,J. Katzman} %TODO mandatory. First: Use abbreviated first/middle names. Second (only in severe cases): Use first author plus 'et al.'
\keywords{Equilibrium computation, Lipschitz games, Population games, PPAD} %TODO mandatory; please add comma-separated list of keywords
\author{
 Paul W. Goldberg \href{https://orcid.org/0000-0002-5436-7890}{\orcidsymbol} \\
  Department of Computer Science\\
  University of Oxford\\
  Oxford OX1 3QD, UK \\
  \texttt{paul.goldberg@cs.ox.ac.uk} \\
   \And
 Matthew Katzman \href{https://orcid.org/0000-0001-8147-9110}{\orcidsymbol}\thanks{This author was supported by an Oxford-DeepMind Studentship during the development of this work.} \\
  Department of Computer Science\\
  University of Oxford\\
  Oxford OX1 3QD, UK \\
  \texttt{matthew.katzman@cs.ox.ac.uk} \\
}
\date{\today}
\newtheorem{theorem}{Theorem}[section]
\newtheorem{corollary}[theorem]{Corollary}
\newtheorem{lemma}[theorem]{Lemma}
\theoremstyle{definition}
\newtheorem{definition}{Definition}
\theoremstyle{remark}
\newtheorem{remark}{Remark}
\author{Paul W. Goldberg\inst{1}\orcidID{0000-0002-5436-7890} \and\\
Matthew Katzman\inst{1}\orcidID{0000-0001-8147-9110}}
\authorrunning{P. W. Goldberg and M. Katzman}
\institute{Department of Computer Science, University of Oxford, Oxford, England\\
\email{\{paul.goldberg,matthew.katzman\}@cs.ox.ac.uk}}
\begin{document}

\maketitle

%TODO mandatory: add short abstract of the document
\begin{abstract}
Lipschitz games, in which there is a limit $\lambda$ (the Lipschitz value of the game) on how much a player's payoffs may change when some other player deviates, were introduced about 10 years ago by Azrieli and Shmaya.  They showed via the probabilistic method that $n$-player Lipschitz games with $m$ strategies per player have {\em pure} $\epsilon$-approximate Nash equilibria, for $\epsilon \geq\lambda\sqrt{8n \log (2mn)}$.
Here we provide the first hardness result for the corresponding computational problem, showing that even for a simple class of Lipschitz games (Lipschitz polymatrix games), finding pure $\epsilon$-approximate equilibria is $\PPAD$-complete, for suitable pairs of values $(\epsilon(n), \lambda(n))$.  Novel features of this result include \emph{both} the proof of $\PPAD$ hardness (in which we apply a population game reduction from unrestricted polymatrix games) and the proof of containment in $\PPAD$ (by derandomizing the selection of a pure equilibrium from a mixed one).  In fact, our approach implies containment in $\PPAD$ for any class of Lipschitz games where payoffs from mixed-strategy profiles can be deterministically computed.

% \keywords{Equilibrium computation \and Lipschitz games \and Population games \and PPAD}
\end{abstract}

\section{Introduction}

The basic setting of game theory models a finite game as a finite set of {\em players}, each of whom chooses from a finite set of allowed {\em actions} (or {\em pure strategies}). Such a game maps any choice of actions by the players to {\em payoffs} for the players. It follows that if players are allowed to randomize over their actions, there is a well-defined notion of expected payoff for each player. Nash's famous theorem \cite{N51} states that there exist randomized (or ``mixed'') strategies for the players so that no player can improve their expected payoff by unilaterally deviating to play some alternative strategy. In this paper we make a standard assumption that all payoffs lie in the range $[0,1]$, and we take an interest in $\epsilon$-approximate Nash equilibria, in which no single player can improve their expected payoff by more than some small additive $\epsilon>0$ by deviating.

In a {\em $\lambda$-Lipschitz game}, a deviation by any single player can change any other player's payoffs by at most an additive $\lambda$. This property of a multiplayer game is of course analogous to the notion of Lipschitz continuity in mathematical analysis. As shown by Azrieli and Shmaya \cite{AS13}, for $\lambda\leq{\epsilon}/{\sqrt{8n\log(2mn)}}$, any $n$-player, $m$-action, $\lambda$-Lipschitz game has an $\epsilon$-approximate Nash equilibrium {\em in pure strategies}. This is shown by taking a mixed-strategy Nash equilibrium, known to exist by \cite{N51}, and showing that when pure strategies are sampled from it, there is a positive probability that it will be $\epsilon$-approximate.
As observed in \cite{AS13}, solutions based on {\em mixed} strategies are often criticized as being unrealistic in practice, and pure-strategy solutions are more plausible. However the existence proof of \cite{AS13} is non-constructive, raising the question of how hard they are to discover.

Here we study the problem of {\em computing} approximate pure equilibria of Lipschitz games. In a Lipschitz {\em polymatrix} game, the effect of any one player $A$'s action on the payoffs of any other player $B$ is both bounded and independent of what the remaining players are doing. For each of $B$'s available actions, $B$'s payoff is just the sum of a collection of contributions (each at most $\lambda$) associated with the action of each other player, plus some constant shift. For these simple games, we identify values of $\lambda$ and $\epsilon$ (as functions of $n$) for which this problem is \PPAD-complete, and thus unlikely to have a polynomial-time algorithm. Containment in \PPAD\ holds for more general subclasses of Lipschitz games; in essence, any for which one can deterministically and efficiently compute payoffs associated with mixed-strategy profiles. However, for Lipschitz games whose payoff function is computed by a general circuit (or more abstractly, a payoff oracle for pure-strategy profiles), we just have containment in a randomized analogue of \PPAD.

\subsection{Background, Related Work}

A key feature of the problem of computing a mixed Nash equilibrium (either an exact one, or an $\epsilon$-approximate one for some $\epsilon>0$) is that due to the guaranteed existence \cite{N51} of a solution that once found can be easily checked, it cannot be \NP-hard unless $\NP = \coNP$ \cite{MP91}. Beginning with \cite{DGP09,CDT09}, the complexity class \PPAD\ (defined below) emerged as the means of identifying whether a class of games is likely to be hard to solve. Until recently, \PPAD-completeness has mostly been used to identify computational obstacles to finding mixed equilibria rather than pure ones.  In the setting of Bayes-Nash equilibrium, Filos-Ratsikas et al. \cite{FRG21} obtain a \PPAD-completeness result for pure-strategy solutions of a class of first-price auctions. Closer to the present paper is a \PPAD-completeness result of Papadimitriou and Peng \cite{PP21} for a class of public goods games with divisible goods, where the continuous action-space of the players can represent probabilities in the solution of a corresponding 2-player game. Here instead we approximate continuous values (those taken at the gates of an arithmetic circuit for which we seek a fixed point) by the actions of a subset of the players, interpreting a given estimate as the fraction of players in such a subset who play some strategy.

\subsubsection*{Lipschitz Games}

Lipschitz games are a well-studied topic formally introduced by Azrieli and Shmaya in \cite{AS13} as a natural class of games that admit approximate pure Nash equilibria.  Such games arise in many common situations such as financial markets and various types of network games.

These games are assumed to have a large number of players, and introduce the (Lipschitz) restriction that every player can have at most a bounded impact on the payoffs of any other player.  They exhibit many fascinating properties, such as various metrics of fault tolerance (explored in \cite{GR14}) and ex-post stability of equilibria (studied in \cite{DK15}).

Following their introduction, further work on equilibria of Lipschitz games by Daskalakis and Papadimitriou \cite{DP15} shows that for {\em anonymous} Lipschitz games, pure approximate equilibria can be computed in polynomial time via reduction to \textsc{MaxFlow}, even for an approximation factor \emph{independent} of the number of players (i.e. $\epsilon=O(\lambda m)$ for $\lambda$-Lipschitz, $m$-action games).  When such games are restricted to $2$ actions, there is an algorithm that can find such equilibria making polynomially-many \emph{approximate} queries \cite{GT17} (where the query may be off by some additive error), and even a randomized algorithm running in polynomial time using only \emph{best-response} queries \cite{Bab13}.  Cummings \emph{et al}.~\cite{CKRW15} exploit a concentration argument to generalize \cite{DP15} and identify pure approximate equilibria in {\em aggregative games}.

In \cite{GK21}, Goldberg and Katzman study Lipschitz games from the perspective of {\em query complexity}, in which an algorithm has black-box access to the game's payoff function. They identify lower bounds on the number of payoff queries needed to find an equilibrium. This leaves open the problem of computing an equilibrium of a Lipschitz game that has a \emph{concise} representation: if a game is known to be concisely representable, then its query complexity is low \cite{GR16}.  Here we show that for concisely-representable Lipschitz games, there remains a computational obstacle. Prior to \cite{GK21}, \cite{GCW19} gave a query-based algorithm for computing $1/8$-approximate Nash equilibria in $1/n$-Lipschitz games; there may be scope to improve on the constant of $1/8$, but the present paper indicates a limit to further progress.

\subsubsection*{Polymatrix Games}

Another line of work explores the computational properties of equilibria in \emph{polymatrix} games introduced in \cite{Jan68}.  \cite{CD11} and \cite{DP11} show that, for \emph{coordination-only} polymatrix games (games in which each pair of players wants to agree) finding pure Nash equilibria is $\PLS$-complete while finding mixed Nash equilibria is contained in $\CLS$.  Furthermore, \cite{Rub18} shows that the problem of finding $\epsilon$-approximate Nash equilibria of \emph{general} polymatrix games is $\PPAD$-complete for some constant $\epsilon>0$.  To complement this lower bound, \cite{DFSS17} provide an upper bound in the form of an algorithm running in time polynomial in the number of players to find an $\epsilon$-approximate Nash equilibrium of such games for any constant $\epsilon>0.5$ \emph{independent of the number of actions}. Polymatrix games are known to be hard to solve even for sparse win-lose matrices \cite{LLD21}, and for tree polymatrix games \cite{DFS20tree}.

Here we extend the $\PPAD$-hardness results of \cite{Rub18} to games having the Lipschitz property in addition to being polymatrix and binary-action.  We show $\PPAD$ containment for the problem of finding approximate \emph{pure} Nash equilibria of this class.

\subsection{Our Contributions}

In its most general form, the problem we address is, for $\lambda\leq{\epsilon}/{\sqrt{8n\log4mn}}$:
\begin{quote}
\emph{Given as input an $n$-player, $m$-action game, find either an $\epsilon$-approximate pure Nash equilibrium, or a witness that the game is {\em not} $\lambda$-Lipschitz for the above $\lambda$ (a witness consists of two pure-strategy profiles whose payoffs show the game is not $\lambda$-Lipschitz).}
\end{quote}
We will consider $m$, $\epsilon$, and $\lambda$ to be parameters of the problem statement rather than provided in the input, however the problem is similar if these are instead passed in as input.

Notice that the two alternative kinds of solutions (an equilibrium \emph{or} a witness against the $\lambda$-Lipschitz property) avoids a promise problem: there is no hard-to-check promise that a given game is Lipschitz. The problem statement can be taken to apply in general to any class of games whose payoff functions are efficiently computable (e.g. circuit games \cite{SchoenebeckV12}). Efficient computability of payoffs is also a sufficient condition to guarantee that solutions are easy to check, allowing us to check the validity of either kind of solution.  In the special case of Lipschitz \emph{polymatrix} games of interest here, they can be presented in a straightforward format for which we can check the $\lambda$-Lipschitz property.

A \PPAD-completeness result consists of containment in \PPAD\ as well as \PPAD-hardness. In the context of {\em mixed-strategy} Nash equilibrium computation, the following property of a class of games is sufficient for containment in \PPAD:
\begin{quote}
\emph{Given a mixed-strategy profile, it is possible to deterministically compute the payoffs with additive error $O(\epsilon)$, in time polynomial in the representation size of the game, and $1/\epsilon$}.
\end{quote}
This property is shared by most well-studied classes of multiplayer games, such as graphical games or polymatrix games. It does not hold for unrestricted Lipschitz games\footnote{For Lipschitz games presented in terms of unrestricted circuits that compute payoffs from pure-strategy profiles, one cannot deterministically and efficiently compute approximate payoffs to mixed-strategy profiles unless $\PP=\RP$.  For games abstractly presented via a payoff query oracle, this is unconditionally not possible to achieve (regardless of any complexity class collapses).}, but it of course holds for Lipschitz polymatrix games. In this paper we obtain \PPAD\ containment for {\em pure-strategy} approximate equilibria. The approach of \cite{AS13} places the problem in ``randomized \PPAD''. We establish containment in \PPAD\ via a {\em  deterministic} approach to selecting a pure approximate equilibrium (Lemma \ref{lem:pure}). Our approach requires $\lambda$ to be a smaller-growing function of $\epsilon$ and $n$ than that of \cite{AS13}, however there \emph{is} overlap between the $(\epsilon,\lambda)$ r\'egime where the problem is contained in (deterministic) \PPAD\ and the one where the problem is hard for \PPAD.

\PPAD-hardness applies to the binary-action case where each player has two pure strategies $\{1,2\}$.  We reduce from the polymatrix games of \cite{Rub18}, by considering an induced {\em population game} (details in Section~\ref{sec:hard}), which has the Lipschitz property.

% As in \cite{Rub18}, we reduce from the \PPAD-complete problem of computing a fixpoint of a linear arithmetic circuit. Each gate in the circuit corresponds to a subset of players in the game, and the value taken at that gate corresponds to the fraction of players in that subset who play 2 as opposed to 1.

For any fixed number of actions $m\geq2$, $\PPAD$-completeness (Theorem~\ref{thm:consteps} and Corollary~\ref{cor:funceps}) holds for carefully-chosen functions $\epsilon(n)$ and $\lambda(n)$. In particular, $\lambda(n)$ needs to be sufficiently small relative to $\epsilon(n)$ in order for our derandomization approach to achieve containment in $\PPAD$ (described in Section~\ref{sec:containment}, Theorem~\ref{thm:contain}). At the same time $\lambda(n)$ needs to be large enough that we still have hardness for $\PPAD$ (Section~\ref{sec:hard}, Theorem~\ref{thm:hard}). Theorem~\ref{thm:consteps} and Corollary \ref{cor:funceps} identify $\epsilon(n)$, $\lambda(n)$ for which both of these requirements hold.

\section{Preliminaries}\label{sec:prelim}

We use the following basic notation throughout.
\begin{itemize}
\item Boldface letters denote vectors; the symbol $\veca$ is used to denote a pure profile, and $\vecp$ is used when the strategy profiles may be mixed.
\item $[n]$ and $[m]$ denote the sets $\{1,\ldots,n\}$ of players and $\{1,\ldots,m\}$ of actions, respectively.  Furthermore, $i\in[n]$ will always refer to a player, and $j\in[m]$ will always refer to an action.
\end{itemize}

\subsection{Basic Game Theoretic and Complexity Concepts}\label{subsec:game}

We first introduce standard concepts of strategy profiles, payoffs, regret, and pure/mixed equilibria, followed by the complexity class $\PPAD$.  We mostly consider binary-action games (where $m=2$) with actions labelled $1$ and $2$ (for more general results considering $m>2$, see Section \ref{sec:m}).

\paragraph*{Types of Strategy Profile}
\begin{itemize}
        \item A \emph{pure} action profile $\veca=(a_1,\ldots,a_n)\in[m]^n$ is an assignment of one action to each player. We use $\veca_{-i}=(a_1,\ldots,a_{i-1},a_{i+1},\ldots,a_n)\in[m]^{n-1}$ to denote the set of actions played by players in $[n]\setminus\{i\}$.
        \item A (possibly \emph{mixed}) strategy profile $\vecp=(p_1,\ldots,p_n)\in(\Delta([m]))^n$, where $\Delta([m])$ is the probability simplex over $[m]$. When $m=2$, let $p_i$ be the probability with which player $i$ plays action $2$. Generally $p_{i,j}$ is the probability that player $i$ allocates to action $j$. The set of distributions for players in $[n]\setminus\{i\}$ is denoted
        $\vecp_{-i}=(p_1,\ldots,p_{i-1},p_{i+1},\ldots,p_n)$.
        When $\vecp$ contains just $0$-$1$ values, $\vecp$ is equivalent to some action profile $\veca\in[m]^n$.
        \item A random realization of a mixed strategy profile $\vecp$ is a pure strategy profile $\veca$ such that, for every player $i$, $a_i=j$ with probability $p_{i,j}$.
\end{itemize}

\paragraph*{Notation for Payoffs}
    Given player $i$, action $j$, and pure action profile $\veca$,
    \begin{itemize}
        \item $u_i(j,\veca_{-i})$ is the payoff that player $i$ obtains for playing action $j$ when all other players play the actions given in $\veca_{-i}$.
        \item $u_i(\veca)=u_i(a_i,\veca_{-i})$ is the payoff that player $i$ obtains when all players play the actions given in $\veca$.
        \item Similarly for mixed-strategy profiles: $u_i(j,\vecp_{-i})=\E_{\veca_{-i}\sim\vecp_{-i}}[u_i(j,\veca_{-i})]$ and $u_i(\vecp)=\E_{\veca\sim\vecp}[u_i(\veca)]$.
    \end{itemize}
    
\paragraph*{Solution Concepts}

\begin{definition}[Best Response]\label{def:br}
    Given a player $i$ and a strategy profile $\vecp$, define the \emph{best response}
    \[\br_i(\vecp)=\argmax_{j\in[m]}u_i(j,\vecp_{-i}).\]
    In addition, for $\epsilon>0$, any action $j'$ satisfying
    \[u_i(\vecp)-u_i(j',\vecp_{-i})\leq\epsilon\]
    is labelled an \emph{$\epsilon$-best response}.
\end{definition}

\begin{definition}[Regret]\label{def:reg}
    Given a player $i$ and a strategy profile $\vecp$, define the regret
    \[\reg_i(\vecp)=\max_{j\in[m]}u_i(j,\vecp_{-i})-u_i(\vecp)\]
    to be the difference between the payoffs of player $i$'s best response to $\vecp_{-i}$ and $i$'s strategy $p_i$.
\end{definition}

\begin{definition}[Equilibria]
    We consider the following three types of Nash equilibrium:
    \begin{itemize}
        \item An {\em $\epsilon$-approximate Nash equilibrium ($\epsilon$-ANE)} is a strategy profile $\vecp^*$ such that, for every player $i\in[n]$, $\reg_i(\vecp^*)\leq\epsilon$.
        \item An {\em $\epsilon$-well supported Nash equilibrium ($\epsilon$-WSNE)} is an $\epsilon$-ANE $\vecp^*$ for which every action $j$ in the support of $p^*_i$ is an $\epsilon$-best response to $\vecp^*_{-i}$.
        \item An {\em $\epsilon$-approximate pure Nash equilibrium ($\epsilon$-PNE)} is a pure action profile $\veca$ such that, for every player $i\in[n]$, $\reg_i(\veca)\leq\epsilon$.
    \end{itemize}
\end{definition}

\paragraph*{The Complexity Class $\PPAD$}

We assume the reader is familiar with the classes of problems $\Ptime$ and $\NP$ and the concepts of asymptotic behavior (for an introduction to these concepts, see e.g. \cite{AB09}). Below we provide the detailed definition of \PPAD, introduced in \cite{Pap94}.

$\PPAD$ is defined by a problem that is complete for the class, the $\EotL$ problem.

\begin{definition}[\EotL]
Define $\EotL$ to be the problem taking as input a graph $G=(V,E)$, where the vertex set $V$ is the exponentially-large set $\{0,1\}^n$, and $E$ is encoded by two circuits $P$ and $S$, where $G$ contains directed edge $(u,v)$ if and only if $S(u)=v$ and $P(v)=u$.  Every vertex has both in-degree and out-degree at most one, and $0^n\in V$ has indegree $0$, outdegree $1$. The problem is to output another vertex $v$ with total degree $1$.
\end{definition}

\begin{definition}[$\PPAD$]\label{def:PPAD}
$\PPAD$ is the set of problems many-one reducible to $\EotL$ in polynomial time. A problem belonging to \PPAD is \PPAD-complete if $\EotL$ reduces to that problem.
\end{definition}

Here we do not use $\EotL$ directly; we reduce from another pre-existing \PPAD-complete problem.

\subsection{Lipschitz Polymatrix Games}

In this section we define the classes of games we consider, followed by the associated problem we solve.  We will first define binary-action Lipschitz games, before extending the definition to more actions.

\begin{definition}[Lipschitz Games \cite{AS13}]
For any value $\lambda\in(0,1]$, a binary-action {\em $\lambda$-Lipschitz game} is a game in which a change in strategy of any given player can affect the payoffs of any other player by at most an additive $\lambda$, i.e. for every player $i$ and pair of action profiles $\veca,\veca'$ with $a_i=a'_i$,
\[|u_i(\veca)-u_i(\veca')|\leq\lambda||\veca_{-i}-\veca'_{-i}||_0\]
(equivalently, the $\ell_1$ norm can be used, and is more relevant when it comes to mixed strategies).  Note that, under the definition of payoffs to mixed strategies presented above, because mixed strategies of binary-action games are still represented as vectors (of scalars), this can be easily extended to mixed profiles $\vecp,\vecp'$ with $p_i=p'_i$:
\[|u_i(\vecp)-u_i(\vecp')|\leq\lambda||\vecp_{-i}-\vecp'_{-i}||_1.\]
\end{definition}

The above definition, while presented in a different way, is mathematically equivalent to that given by \cite{AS13} (just provided here in a form more relevant to the current work).

In order to extend the definition above to $m>2$ actions, we will require a measure of distance for probability distributions:

\begin{definition}[Total Variation Distance]
    The \emph{total variation distance} between distributions $\mathcal{D}_1$ and $\mathcal{D}_2$ over discrete sample space $\mathcal{X}$ (each represented by $m$-dimensional vectors) is
    \[d_{\textnormal{TV}}(\mathcal{D}_1,\mathcal{D}_2)=\frac{1}{2}||\mathcal{D}_1-\mathcal{D}_2||_1=\frac{1}{2}\sum_{x\in\mathcal{X}}\left|\Pr_{x'\sim\mathcal{D}_1}(x'=x)-\Pr_{x'\sim\mathcal{D}_2}(x'=x)\right|.\]
\end{definition}

\begin{definition}\label{def:mixedLip}
For $m>2$, the pure action definition remains the same as in the $m=2$ case.  However, now that mixed strategies are represented by vectors of distributions, the Lipschitz property becomes
\[|u_i(\vecp)-u_i(\vecp')|\leq\lambda\sum_{i'=1}^nd_{\textnormal{tv}}(p_{i'},p'_{i'})\]
where $d_{\textnormal{tv}}$ is the total variation distance between two distributions (i.e. an extension of the $\ell_1$ norm definition above).
\end{definition}

\begin{remark}\label{rem:reg}
    We will rely on the total variation distance implicitly throughout the remainder of this work.  It follows from Definition \ref{def:mixedLip} that when a player $i$ moves probability $\rho$ from one action to another in a mixed strategy profile, every other player's payoffs may be affected by at most an additive $\lambda\rho$.  In addition, because the payoff function $u_i$ is $\lambda$-Lipschitz in this sense, the regret function $\reg_i$ is $2\lambda$-Lipschitz in the same way, as the payoff of player $i$'s best response can increase with a slope of at most $\lambda$ while the payoff of player $i$'s current strategy can decrease with a slope of at most $\lambda$.
\end{remark}

Here, we define \textit{polymatrix} games, introduced in \cite{Jan68}:

\begin{definition}[Polymatrix Games]
    A polymatrix game is an $n$-player, $m$-action game in which each pair of players $i_1,i_2$ plays a bimatrix game and receives as payoff the sum of their $n-1$ payoffs from each separate bimatrix game.  If player $i_1$ plays action $j_1$ and player $i_2$ plays action $j_2$, we denote the payoff to player $i_1$ from this bimatrix game as $\beta_{i_1,i_2,j_1,j_2}$.
\end{definition}

Note that, as defined, such a game is $\lambda$-Lipschitz for any
\[\lambda\geq\max_{i,i'\in[n],j,j'_1,j'_2\in[m]}\left|\beta_{i,i',j,j'_1}-\beta_{i,i',j,j'_2}\right|.\]

At most $n^2m^3$ queries to the payoff function are required to ensure this inequality holds.  Furthermore, if $i$ plays $j$, $i$'s payoff is a linear function of the indicator variables of the other player-strategy pairs. $\beta_{i,i',j,j'}$ represents a small contribution, possibly negative, that $i'$ makes to $i$ (when $i$ plays $j$) by playing $j'$.  To be valid, all total payoffs must lie in $[0,1]$, which can be checked from a game presented via the quantities $\beta_{i,i',j,j'}$ by performing the following additional checks for every $i\in[n],j\in[m]$:
\[\sum_{i'\neq i}\max_{j'\in[m]}\beta_{i,i',j,j'}\leq1\]
\[\sum_{i'\neq i}\min_{j'\in[m]}\beta_{i,i',j,j'}\geq0\]
There are $2nm$ such calculations, each of which requires $O(nm)$ operations, so an invalid game can be uncovered in polynomial time.

\begin{remark}
Lipschitz polymatrix games are a strict subset of Lipschitz games and require only $O(n^2m^2)$ queries to learn completely and can thus be concisely represented in $O(n^2m^2)$ space (by providing the value of each $\beta_{i_1,i_2,j_1,j_2}$).
\end{remark}

We now define the main problem we consider in this work.

\begin{definition}
Define $\melPLPG$ to be the problem of finding $\epsilon$-PNEs of $n$-player, $m$-action, $\lambda$-Lipschitz polymatrix games, or alternatively finding a witness that the game is not $\lambda$-Lipschitz (note once more that $m$, $\epsilon$, and $\lambda$ are parameters of the problem while $n$ is provided as input).  Furthermore, define $\melLPG$ to be the equivalent problem for mixed equilibria.
\end{definition}

\begin{remark}
Note that, while we will often fix the number of actions $m$, both $\epsilon$ and $\lambda$ are often functions of the number of players $n$. We generally think of these as being decreasing functions of $n$.
\end{remark}

We are interested in the complexity of the problem, for various pairs of functions $\epsilon(n)$, $\lambda(n)$.

\begin{remark}\label{rem:scale}
    One basic observation we make is that, for any $a\in(0,1)$, $\melPLPG$ reduces to $\PLPG{m}{a\epsilon}{a\lambda}$, by rescaling payoffs.
\end{remark}

We extend the following result of \cite{Rub18} to pure equilibria in \emph{Lipschitz} polymatrix games:

\begin{theorem}[\cite{Rub18}]\label{thm:Rub18}
    There exists some constant $\epsilon>0$ such that given a binary-action polymatrix game, finding an $\epsilon$-ANE is $\PPAD$-complete.
\end{theorem}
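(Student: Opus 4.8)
The plan is to establish both directions of the $\PPAD$-completeness claim. Containment in $\PPAD$ is the routine direction: in a binary-action polymatrix game player $i$'s expected payoff under a mixed profile $\vecp$ is linear in each coordinate $p_{i'}$, hence exactly and efficiently computable from the bimatrix entries $\beta_{i_1,i_2,j_1,j_2}$. One then writes down the standard Nash map $f:\vecp\mapsto\vecp'$ on the product of simplices whose fixed points are the exact equilibria, notes that $f$ is a polynomial-time-computable, efficiently-Lipschitz self-map of a polytope, and invokes the classical reduction of approximate Brouwer fixed points of such maps to $\EotL$. A routine continuity/robustness argument converts a sufficiently good approximate fixed point of $f$ into an $\epsilon$-ANE, placing the problem in $\PPAD$.

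For hardness I would reduce from $\EotL$. The textbook pipeline $\EotL\to\Brouwer\to\eGC\to$ polymatrix games already does most of the work: a generalized circuit is a collection of gates ($G_\zeta$, $G_{\times\zeta}$, $G_+$, $G_-$, $G_<$, and the Boolean gates) over real wires in $[0,1]$, and its solutions are wire assignments satisfying every gate up to additive $\epsilon$. Encoding each wire by a single player whose probability $p_i$ of playing action $2$ represents that wire's value, and implementing each gate by a constant-size gadget of players playing bimatrix games with the in- and out-wire players (plus a handful of auxiliaries), yields a binary-action game with payoffs in $[0,1]$; since each gadget contribution is pairwise-separable, the game is polymatrix. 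This immediately gives $\PPAD$-hardness, but only for some inverse-polynomial $\epsilon$, because in the naive chain the additive error can accumulate roughly proportionally to the circuit depth.

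The main obstacle — and the core of Rubinstein's argument — is boosting $\epsilon$ to an absolute constant despite this error accumulation. The idea is to make the simulated line-following computation \emph{noise-resilient}: carry each logical value redundantly on a block of wires rather than a single wire, and interleave the arithmetic gadgets with an averaging / error-correcting gadget that contracts the block's spread back below threshold before the next step, so that a constant fraction of corrupted players in a block cannot corrupt the decoded logical value. Concretely one first reduces $\EotL$ to a ``localized'' Brouwer-type (or $2$CSP-style, PCP-of-proximity-flavoured) instance whose structure already forces any assignment that is locally consistent almost everywhere to be globally close to an honest trajectory; one then needs only gadgets whose own guarantees survive a constant fraction of faulty inputs. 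The work is in verifying that (i) these error-correction gadgets remain pairwise-separable, so the binary-action polymatrix format and the $[0,1]$ payoff bound are preserved; (ii) the contraction is uniform enough that a constant regret bound per player propagates to only a constant decoding error at every block; and (iii) the whole composition, over a computation path of polynomial length, still decodes to a genuine $\eGC$-solution within the target constant. I expect step (ii)–(iii) — the end-to-end error budget for the composed, robustified circuit — to be the real difficulty.

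Finally, given an $\epsilon$-ANE $\vecp$ of the constructed game, I would decode each logical block by taking the majority (or average) of its players' strategies; the noise-resilience guarantee ensures the resulting wire assignment satisfies every gate of the original generalized circuit within the prescribed constant, hence is an $\eGC$-solution, and unwinding the chain yields a solution to the original $\EotL$ instance. Together with the containment argument, this gives $\PPAD$-completeness of finding an $\epsilon$-ANE of binary-action polymatrix games for the constant $\epsilon$ produced by the construction. (As a by-product the bound transfers to bounded-degree graphical games, since binary-action polymatrix games are a special case, but only the polymatrix statement is needed in what follows.)
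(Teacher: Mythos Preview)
The paper does not prove this theorem: it is stated with the citation \cite{Rub18} and used as a black box in the proof of Theorem~\ref{thm:hard}. There is no ``paper's own proof'' to compare against; the authors simply import Rubinstein's result.

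That said, your sketch is a reasonable high-level outline of what Rubinstein actually does. The containment argument is fine. For hardness, the pipeline $\EotL\to\Brouwer\to\eGC\to$ polymatrix is indeed the backbone, and you correctly identify that the real content is pushing $\epsilon$ up to a constant. Your description of the mechanism---redundant encoding of values across blocks of players, averaging/error-correction gadgets, and a PCP-of-proximity-style locally-checkable structure so that local consistency forces global closeness to an honest computation---is in the right spirit. Be aware, though, that what you have written is a roadmap rather than a proof: Rubinstein's construction runs to dozens of pages, and the steps you flag as ``the real difficulty'' (the uniform contraction and the end-to-end error budget) are where essentially all of the technical work lies. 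If this were being submitted as an actual proof rather than a plan, it would not suffice; as a plan it is accurate but should be clearly labelled as a summary of \cite{Rub18} rather than an independent argument.
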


\section{Results}

The result we achieve is as follows:

\begin{theorem}[Main Result]\label{thm:consteps}
    For every constant $m\geq2$, there exists some constant $\epsilon>0$ such that, for all functions $\lambda(n)=\Theta(n^{-3/4})$, $\melPLPG$ is $\PPAD$-complete.
\end{theorem}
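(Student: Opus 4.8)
The plan is to prove \PPAD-completeness in its two standard halves --- \PPAD-hardness (Theorem~\ref{thm:hard}, Section~\ref{sec:hard}) and containment in \PPAD\ (Theorem~\ref{thm:contain}, Section~\ref{sec:containment}) --- and then to observe that one constant $\epsilon$ makes both halves go through for \emph{every} $\lambda(n)=\Theta(n^{-3/4})$.

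\emph{Hardness.} Let $\epsilon_0>0$ be the constant of Theorem~\ref{thm:Rub18}, so finding an $\epsilon_0$-ANE of a binary-action polymatrix game is \PPAD-hard. Given such a game $G$ on $k$ players, I reduce it to $\melPLPG$ with $\epsilon=\epsilon_0$ by building the induced \emph{population game} $G'$: replace each player $v$ of $G$ by a population $P_v$ of $N$ identical copies, let a copy $i\in P_v$ playing action $j$ against a copy $i'\in P_w$ with $w\neq v$ receive the contribution $\tfrac1N\beta_{v,w,j,j'}$ when $i'$ plays $j'$, and let interactions within a population contribute nothing. There are $n=kN$ players and $O(n^2)$ contributions, written down in polynomial time, with $N=\Theta(k^3)$ fixed below. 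I then check: \textbf{(i)} the total payoff of any $i\in P_v$ under a pure profile $\veca$ of $G'$ equals \emph{exactly} the expected payoff of $v$ in $G$ against the mixed profile $\vecp$ whose $w$-th coordinate is the fraction $f_w$ of $P_w$ playing action $2$, so payoffs of $G'$ lie in $[0,1]$; a single deviation changes any payoff of $G'$ by at most $\tfrac1N$, so $G'$ is $\tfrac1N$-Lipschitz, and since the inequality $\tfrac1N\leq c(kN)^{-3/4}$ (with $c$ the implied constant in $\lambda(n)=\Omega(n^{-3/4})$) is equivalent to $N\geq c^{-4}k^{3}$, choosing $N=\Theta(k^3)$ large enough makes $G'$ a genuine $\lambda(n)$-Lipschitz polymatrix game, which therefore admits no ``not $\lambda$-Lipschitz'' witness, forcing any solver to return an $\epsilon_0$-PNE; \textbf{(ii)} soundness: the average of $\reg_i(\veca)$ over $i\in P_v$ equals the regret of $v$ in $G$ at $\vecp$, so an $\epsilon_0$-PNE of $G'$ decodes to an $\epsilon_0$-ANE of $G$; \textbf{(iii)} totality: rounding a mixed Nash equilibrium of $G$ to integer population counts perturbs each of its coordinates by at most $\tfrac1N$, so by the Lipschitz property every individual regret in $G'$ stays at most $2(k-1)/N=O(1/k^{2})<\epsilon_0$ (for large $k$; finitely many small instances are handled directly), hence $G'$ has an $\epsilon_0$-PNE; \textbf{(iv)} for $m>2$, rescale payoffs into $[\tfrac12,1]$ (harmless, cf.\ Remark~\ref{rem:scale}, and it only shrinks $\lambda$) and assign payoff $0$ to every action $\geq3$ so those actions are strictly dominated --- the details are deferred to Section~\ref{sec:m}.

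\emph{Containment.} This has two stages. First, the mixed problem $\LPG{m}{\epsilon_0/2}{\lambda(n)}$ is in \PPAD: for polymatrix games one can deterministically and efficiently evaluate payoffs of mixed profiles and verify the $\lambda$-Lipschitz property, so it reduces to the \PPAD\ problem of finding a mixed $(\epsilon_0/2)$-WSNE of a polymatrix game. Second (the derandomization, Lemma~\ref{lem:pure}), I turn such a mixed $(\epsilon_0/2)$-WSNE $\vecp^*$ into a pure $\epsilon_0$-PNE deterministically and in polynomial time. Sampling $\veca\sim\vecp^*$, each $u_i(j,\veca_{-i})$ is a sum of at most $n$ independent terms of range at most $\lambda(n)$, so by Hoeffding it deviates from $u_i(j,\vecp^*_{-i})$ by more than $t$ with probability at most $2\exp(-2t^2/(n\lambda(n)^2))$; with the well-supportedness condition this gives $\reg_i(\veca)\leq\epsilon_0/2+2t$, and a union bound over the at most $nm$ pairs $(i,j)$ yields positive success probability whenever $\lambda(n)<\epsilon_0/\sqrt{8n\log(2nm)}$ --- which holds for \emph{every} $\lambda(n)=\Theta(n^{-3/4})$ at all large $n$, since $n^{-3/4}$ is asymptotically smaller than $n^{-1/2}/\sqrt{\log n}$ regardless of the hidden constant. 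I then derandomize by the method of conditional probabilities using the standard exponential (moment-generating-function) pessimistic estimator for this union bound, fixing the players' actions one at a time; the point special to polymatrix games is that each conditional value of the estimator is a product over the still-randomized players of elementary two-point expectations, hence exactly computable in polynomial time. Beginning below $1$ and never increasing, the estimator certifies that the final pure profile is an $\epsilon_0$-PNE, and composing the two stages keeps the whole procedure inside \PPAD.

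Taking $\epsilon=\epsilon_0$, both halves apply for every constant $m\geq2$ and every $\lambda(n)=\Theta(n^{-3/4})$, giving \PPAD-completeness. I expect the main obstacle to be the derandomization of Lemma~\ref{lem:pure}: one must verify that the exponential pessimistic estimator is simultaneously a valid upper bound on the conditional failure probability, multiplicative enough that fixing one player never increases it, and exactly and efficiently computable for the class at hand, and that the slack between $\lambda(n)=\Theta(n^{-3/4})$ and the Hoeffding threshold $\Theta(n^{-1/2}/\sqrt{\log n})$ absorbs the error of working with an \emph{approximate} rather than exact mixed equilibrium. A secondary point is the population-game bookkeeping: establishing the exact payoff correspondence between $G'$ and $G$, and that the single choice $N=\Theta(k^3)$ at once keeps $G'$ inside the $\lambda(n)$-Lipschitz class and keeps the search problem total.
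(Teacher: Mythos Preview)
Your proof is correct, and the hardness half is essentially the paper's population-game reduction (with parameters tuned to hit $\lambda(n)=\Theta(n^{-3/4})$ rather than the paper's $n^{-4/5}$, which is immaterial). The containment half, however, takes a genuinely different route. The paper's Lemma~\ref{lem:pure} does \emph{not} derandomize the Azrieli--Shmaya sampling argument via conditional probabilities; instead it builds a bespoke potential --- the sum of squared discrepancies $C(\vecp)=\sum_i d_i(\vecp)^2$ (and, for $m>2$, the sum of variances of ``relevant'' action sets) --- and purifies players one at a time by choosing the pure action that keeps the \emph{linear} part of $C$ from increasing, absorbing only the second-order $O(\lambda^2 n)$ growth per step. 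A final clean-up pass flips the few high-regret players. This three-stage argument yields containment only for $\epsilon \geq \lambda\cdot\Omega(n^{2/3})$, and the paper's Discussion explicitly leaves open whether one can push this down to the existence threshold $\epsilon \geq \lambda\cdot\Omega(\sqrt{n\log n})$.

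Your approach --- Raghavan-style pessimistic estimators for the Hoeffding union bound, exploiting that in a polymatrix game each $u_i(j,\veca_{-i})$ is a sum of independent $\lambda$-bounded contributions so the MGF factorizes and is exactly computable --- is the natural derandomization the paper forgoes, and it directly recovers the Azrieli--Shmaya threshold. So your containment argument is both simpler and quantitatively stronger: it places $\melPLPG$ in $\PPAD$ for $\epsilon \geq \lambda\cdot\Omega(\sqrt{n\log n})$, answering the open question in the paper's Discussion for the polymatrix case. The paper's potential-function method, on the other hand, is perhaps more self-contained (no Chernoff machinery) and its variance-based generalization to $m>2$ actions is of some independent interest, but for the headline theorem your route is the cleaner one.
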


In fact, for any constant $\alpha\in(\frac{2}{3},1)$, Theorem \ref{thm:consteps} holds for $\lambda(n)=\Theta(n^{-\alpha})$.  In order to prove Theorem \ref{thm:consteps}, we will need to show both containment in $\PPAD$ and $\PPAD$-hardness under these settings of the parameters.

\subsection{Containment in \PPAD}\label{sec:containment}

\begin{theorem}\label{thm:contain}
    For all functions $\epsilon(n),\lambda(n)$ satisfying
    \[\lambda(n)=\frac{1}{\poly(n)},\qquad\epsilon=\lambda(n)\omega(n^{2/3})\]
    $\PLPG{m}{\epsilon}{\lambda}\in\PPAD$.
\end{theorem}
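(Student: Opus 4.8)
The plan is to reduce $\PLPG{m}{\epsilon}{\lambda}$ to the problem of finding an (approximate) \emph{mixed} Nash equilibrium of the same polymatrix game, i.e.\ to $\melLPG$, which is known to lie in $\PPAD$ because payoffs to mixed profiles can be computed deterministically and exactly from the $\beta_{i,i',j,j'}$ data. So the first step is to invoke (a standard version of) the fact that finding an $\epsilon'$-WSNE of a polymatrix game is in $\PPAD$, for $\epsilon'$ inverse-polynomial in the input size; since the game has the $\lambda$-Lipschitz property given via the $\beta$'s, the ``witness that the game is not $\lambda$-Lipschitz'' alternative can simply be checked and discharged up front by the polynomial-time tests described after the definition of polymatrix games. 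The real work is then a \emph{deterministic rounding} step: given a mixed $\epsilon'$-equilibrium $\vecp$, produce a pure profile $\veca$ with $\reg_i(\veca)\le\epsilon$ for all $i$. This is exactly the content I would isolate as Lemma~\ref{lem:pure} (referenced in the introduction), and it is the crux of the argument.

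For the rounding, I would derandomize the Azrieli--Shmaya sampling argument via the method of conditional expectations. Sampling $\veca\sim\vecp$ independently across players, Remark~\ref{rem:reg} gives that $\reg_i(\veca)$ differs from $\reg_i(\vecp)\le\epsilon'$ by a quantity controlled by $\sum_{i'\ne i}(\text{indicator deviation of }i')$, and a Hoeffding/Azuma bound shows $\Pr[\reg_i(\veca)>\epsilon'+t]\le 2\exp(-t^2/(2n\lambda^2))$ (the $\ell_1$/total-variation formulation of the Lipschitz property is what makes each player contribute independently with range $\lesssim\lambda$). Taking $t=\lambda\sqrt{c\,n\log n}$ for a suitable constant makes the union bound over the $n$ players give failure probability $<1$, which is the existence proof; to make it \emph{constructive}, fix the players' pure actions one at a time, at each step choosing the action value that keeps the conditional probability of the bad event below $1$. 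Each conditional probability is a finite sum of products of the already-fixed / still-random marginals $p_{i',j'}$ and the fixed $\beta$'s, evaluable exactly in polynomial time (this is precisely where ``mixed payoffs are deterministically computable'' is used a second time, now for partially-fixed profiles). After $n$ steps all players are pure and the resulting $\veca$ satisfies $\reg_i(\veca)\le\epsilon'+\lambda\sqrt{c\,n\log n}$ for every $i$.

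It remains to check the parameter budget: we need $\epsilon'+\lambda\sqrt{c\,n\log n}\le\epsilon$. Since $\epsilon'$ can be taken inverse-polynomial and in particular $o(\lambda n^{2/3})$ (the reduction to $\PPAD$ lets us pick $\epsilon'$ as small as any fixed inverse polynomial, and $\lambda=1/\poly(n)$), and since $\sqrt{n\log n}=o(n^{2/3})$, both terms are $o(\lambda\, n^{2/3})$, hence $o(\epsilon)$ by the hypothesis $\epsilon=\lambda\,\omega(n^{2/3})$; so for all large $n$ the bound holds, and the finitely many small $n$ are handled by brute force (or absorbed into constants). By Remark~\ref{rem:scale} we may rescale if convenient. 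Chaining the reductions — discharge the Lipschitz-witness option, solve for a mixed $\epsilon'$-WSNE in $\PPAD$, then deterministically round — gives a polynomial-time many-one reduction from $\PLPG{m}{\epsilon}{\lambda}$ into $\PPAD$.

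The main obstacle I anticipate is the bookkeeping in the derandomization: one must verify that the concentration inequality is applied to the correct Lipschitz formulation (per-player contributions of range $O(\lambda)$, giving the $\sqrt{n}\lambda$ scale rather than $n\lambda$), that the conditional expectations are genuinely polynomial-time computable for \emph{partially} pure profiles, and — most delicately — that the slack between the hardness r\'egime $\lambda=\Theta(n^{-\alpha})$, $\alpha<1$, and the containment requirement $\sqrt{n\log n}\lambda=o(\epsilon)$ is actually nonempty, which is why the threshold sits at $n^{2/3}$ (from $\sqrt{n\log n}=n^{1/2+o(1)}$ one could hope for $n^{1/2+o(1)}$, but the reduction's own blow-up in the number of players, inherited from the population-game construction of Section~\ref{sec:hard}, forces the weaker $n^{2/3}$).
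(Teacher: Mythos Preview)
Your high-level plan matches the paper: check the Lipschitz property syntactically, find a mixed approximate equilibrium in $\PPAD$ (Lemma~\ref{lem:mixed}), then deterministically round to a pure profile (Lemma~\ref{lem:pure}), using closure of $\PPAD$ under Turing reductions. The rounding step, however, is carried out quite differently in the paper, and your version has a real gap.

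The paper does \emph{not} derandomise the Hoeffding argument. Instead, for $m=2$ it introduces a potential $C(\vecp)=\sum_{i\in S} d_i(\vecp)^2$ (sum of squared discrepancies over a growing ``relevant'' set $S$), observes that $C$ is quadratic in each $p_i$, and purifies players one at a time by choosing $p_i\in\{0,1\}$ so as to kill the linear term; the increase is then only the second-order term, $O(\lambda^2 n)$ per player, so $C(\veca)=O(\lambda^2 n^2)$ at the end. A final clean-up step lets the $O(C(\veca)/\delta^2)$ players with regret above $\delta$ switch to their best responses; optimising $\delta$ gives the $\lambda\,\Theta(n^{2/3})$ bound. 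This is where the $n^{2/3}$ actually comes from --- it is an artefact of the quadratic-potential rounding, \emph{not} of the population-game blow-up in Section~\ref{sec:hard} as you suggest (that construction is used only for hardness). The paper explicitly flags the gap between its $n^{2/3}$ and the $\sqrt{n\log n}$ of \cite{AS13} as open.

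Your conditional-expectations route is natural and, if carried out correctly, would in fact close that gap and yield a strictly stronger containment ($\epsilon=\lambda\cdot\widetilde\Omega(\sqrt{n})$ suffices). The gap in your write-up is the claim that ``each conditional probability is a finite sum of products \ldots\ evaluable exactly in polynomial time.'' That is true for conditional \emph{expected payoffs} $u_i(j,\cdot)$, but not for the conditional \emph{tail probabilities} $\Pr[\reg_i(\veca)>\epsilon'+t\mid a_1,\dots,a_k]$: the latter is the tail of a sum of independent bounded variables, and computing it exactly is not known to be polynomial-time (and is essentially a counting problem over the unfixed coordinates). To make the method of conditional expectations go through you need \emph{pessimistic estimators} in the sense of Raghavan --- the exponential-moment bounds from the proof of Hoeffding, which \emph{do} factor as $\prod_{i'}\E_{a_{i'}\sim p_{i'}}[e^{s\beta_{i,i',j,a_{i'}}}]$ for polymatrix games and are therefore computable in polynomial time even with some coordinates fixed. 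With that correction your argument is complete (and in fact improves on the paper's threshold); without it, the step ``choose the action that keeps the conditional probability below~$1$'' cannot be executed.
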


In particular, we prove this when $\epsilon(n)\geq6\lambda(n)\sqrt[3]{n^2m\log3m}$.  We will include the proof for $m=2$ (see Section \ref{sec:m} for the more general proof).  The proof of this theorem is broken down further into two smaller steps, along the lines of \cite{AS13}.  First, we will exhibit sufficient settings of the parameters such that \emph{mixed} approximate Nash equilibria of Lipschitz polymatrix games can be found in $\PPAD$.  Second, with the parameters set as in the statement of Theorem \ref{thm:contain}, we describe a derandomization technique for deriving a \textit{pure} approximate equilibrium from the mixed one in polynomial time.  More formally:

\begin{lemma}\label{lem:mixed}
    Whenever $\lambda(n)=\frac{1}{\poly(n)}$, $\LPG{2}{\frac{\lambda}{8}}{\lambda}\in\PPAD$.
\end{lemma}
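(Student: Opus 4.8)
\textbf{Proof proposal for Lemma \ref{lem:mixed}.}

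The plan is to handle the two kinds of solution separately and then compose. First I would dispose of the ``witness'' alternative: because the game is presented in the polymatrix format (i.e.\ by the quantities $\beta_{i,i',j,j'}$), the $\lambda$-Lipschitz property can be verified \emph{directly and in polynomial time}. Indeed, as noted after the definition of polymatrix games, such a game is $\lambda$-Lipschitz precisely when $|\beta_{i,i',j,j'_1}-\beta_{i,i',j,j'_2}|\le\lambda$ for all relevant indices; so I would scan all $O(n^2m^3)$ such differences, and if one exceeds $\lambda$, output the two pure profiles that agree everywhere except that player $i'$ switches between $j'_1$ and $j'_2$ (with player $i$ playing $j$) — this is an explicit witness. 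If all checks pass, we are reduced to the pure task of \emph{finding} a $\frac{\lambda}{8}$-ANE, which is guaranteed to exist by Nash's theorem together with $\frac{\lambda}{8}>0$.

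The core of the argument is then that finding a $\frac{\lambda}{8}$-ANE of a binary-action polymatrix game lies in $\PPAD$. Here I would use that for binary-action games the mixed-strategy space is the hypercube $[0,1]^n$, and that in a polymatrix game the expected payoffs $u_i(j,\vecp_{-i})$ are \emph{exactly} and efficiently computable (they are affine in the $p_{i'}$). I would then set up a standard Nash/gradient-ascent self-map $g:[0,1]^n\to[0,1]^n$ whose exact fixed points are exactly the Nash equilibria and whose $\delta$-approximate fixed points are $O(\delta)$-ANE: for a suitable step size $\eta$, take $g_i(\vecp)=\mathrm{clip}_{[0,1]}\!\big(p_i+\eta\,(u_i(2,\vecp_{-i})-u_i(1,\vecp_{-i}))\big)$. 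Since all payoffs lie in $[0,1]$, this map is $O(1)$-Lipschitz and polynomial-time computable, so computing an approximate fixed point of it is an instance of the (exponential-precision) $\Brouwer$ problem on the cube, which reduces to $\EotL$ and hence lies in $\PPAD$. Composing the polynomial-time Lipschitz check above with this reduction gives a single polynomial-time many-one reduction of $\LPG{2}{\frac{\lambda}{8}}{\lambda}$ to $\EotL$, establishing containment.

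The step that needs the most care — and the one I expect to be the main obstacle — is the accuracy bookkeeping linking the fixed-point tolerance $\delta$ to the equilibrium quality $\frac{\lambda}{8}$: one must choose $\eta$ and $\delta$ (both as functions of $n$) so that a $\delta$-approximate fixed point of $g$ is provably a $\frac{\lambda}{8}$-ANE, and then verify that the resulting $\delta$ is still $1/\poly(n)$ so that the $\Brouwer$/$\EotL$ instance has polynomial size. Because $\lambda(n)=1/\poly(n)$ by hypothesis, $\frac{\lambda}{8}=1/\poly(n)$ as well, and the required precision is comfortably within the $\exp(-\poly(n))$ precision that $\PPAD$ can encode; but making the constants in ``$\delta$-approximate fixed point $\Rightarrow$ $O(\delta)$-ANE'' explicit (in particular controlling how the clipping at the cube boundary interacts with regret) is the technical heart of the proof. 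The remainder is routine.
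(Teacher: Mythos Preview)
Your proposal is correct and follows essentially the same route as the paper: both reduce to the standard ``approximate Nash is in $\PPAD$'' argument (the paper simply cites Theorem~3.1 of \cite{DGP09}), with the key observation being that mixed payoffs in a polymatrix game are affine and hence exactly computable in $O(n^2m^2)$ time, and both dispose of the witness alternative by a direct polynomial-time scan of the $\beta$-coefficients. Your explicit gradient/Brouwer self-map is just a more unpacked version of that citation; the only caveat is that ``payoffs in $[0,1]$'' does not by itself give an $O(1)$-Lipschitz map --- rather, once the $\lambda$-Lipschitz check has passed, $d_i$ is $2\lambda$-Lipschitz per coordinate and so $g$ has polynomial Lipschitz constant, which is what the $\PPAD$ Brouwer reduction actually needs.
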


\begin{proof}[Sketch]
As in the standard proof placing the problem of finding approximate Nash equilibria in $\PPAD$ (e.g. Theorem 3.1 of \cite{DGP09}), since $\epsilon$ is an inverse polynomial in $n$, the key requirement is the existence of an efficient deterministic algorithm answering mixed payoff queries.  Because the payoffs to a player $i$ in polymatrix games are linear in the actions of every other player, any mixed payoff query can be answered in time $O(n^2m^2)$ (or simply $O(n^2)$ when keeping the number of actions fixed).  Thus $\LPG{2}{\frac{\lambda}{8}}{\lambda(n)}\in\PPAD$.
\end{proof}

\begin{lemma}[Main Technical Lemma]\label{lem:pure}
    For functions $\epsilon(n),\lambda(n)$ satisfying $\epsilon(n)\geq\lambda\sqrt[3]{70n^2}$, an $\epsilon$-PNE of an $n$-player, binary-action, $\lambda$-Lipschitz polymatrix game $G$ can be derived from an $\frac{\lambda}{8}$-ANE of $G$ in polynomial time.
\end{lemma}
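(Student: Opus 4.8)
The plan is to adapt the random-rounding argument of \cite{AS13} into a deterministic one. For a (pure or mixed) profile $\vecx$ write $g_i(\vecx):=u_i(1,\vecx_{-i})-u_i(2,\vecx_{-i})$ for the payoff difference player $i$ sees, and set $\mu_i:=g_i(\vecp)$. Since polymatrix payoffs are linear in each other player's strategy, if $\veca$ is a realization of $\vecp$ then $g_i(\veca)-\mu_i$ is a sum of $n-1$ independent mean-zero terms (one per player $i'\neq i$), each of magnitude at most $2\lambda$. A one-line computation gives, for $m=2$, $\reg_i(\veca)\le\delta_i+|g_i(\veca)-\mu_i|$, where $\delta_i:=\max_j u_i(j,\vecp_{-i})-u_i(a_i,\vecp_{-i})\in\{0,|\mu_i|\}$ is the ``own-action'' regret of $i$'s rounded action $a_i$ against $\vecp_{-i}$. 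So it suffices to round $\vecp$ to a pure $\veca$ with $\delta_i=O(n^{2/3}\lambda)$ and $|g_i(\veca)-\mu_i|=O(n^{2/3}\lambda)$ for every $i$ (I will do $m=2$; larger $m$ is analogous and deferred to Section~\ref{sec:m}).

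The reason one cannot simply derandomize \cite{AS13} is that \cite{AS13} starts from an \emph{exact} equilibrium, in which every action in a player's support is a best response, so $\delta_i=0$ for any realization; here $\vecp$ is only approximate, and a player may place non-trivial weight on a non-best-response. I would handle this with a split. Fix a threshold $\epsilon_0=\Theta(\lambda\sqrt n)$ and call player $i$ \emph{rigid} if $|\mu_i|>\epsilon_0$ and \emph{flexible} otherwise. If $i$ is rigid then, because $\reg_i(\vecp)\le\lambda/8$, the worse of $i$'s two actions has probability at most $\tfrac{\lambda}{8|\mu_i|}<\tfrac{\lambda}{8\epsilon_0}$, so $i$ is almost pure on $\br_i(\vecp)$; pin every rigid player to $\br_i(\vecp)$, which forces $\delta_i=0$ for it. Moving all rigid players to their near-certain best responses perturbs the profile by total variation at most $n\cdot\tfrac{\lambda}{8\epsilon_0}$, hence changes each payoff $u_i(j,\cdot_{-i})$ by at most $\tfrac{n\lambda^2}{8\epsilon_0}=O(\lambda\sqrt n)$. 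A flexible player has $\delta_i\le|\mu_i|\le\epsilon_0$ whichever of its two actions we pick, so we may round it freely. Let $\tilde{\vecp}$ be the profile after the rigid players are pinned (flexible players still mixed as in $\vecp$) and $\tilde{\mu}_i:=g_i(\tilde{\vecp})$; by the perturbation bound $|\tilde{\mu}_i-\mu_i|=O(\lambda\sqrt n)$.

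It remains to round the flexible players so that $|g_i(\veca)-\tilde{\mu}_i|=O(n^{2/3}\lambda)$ for every $i$, which I would do by the method of conditional expectations with the potential $\Psi:=\sum_{i\in[n]}\E\big[(g_i(\veca)-\tilde{\mu}_i)^6\mid\text{roundings made so far}\big]$, the expectation taken over the still-unrounded flexible players sampled from $\vecp$. Because polymatrix payoffs are linear, each such conditional expectation is the expectation of an explicit degree-$6$ polynomial in independent indicator variables, and is computable exactly in $\poly(n)$ time; rounding each flexible player in turn to the value that does not increase $\Psi$ keeps $\Psi$ non-increasing by the tower rule. A standard bound on the sixth moment of a sum of independent mean-zero terms of magnitude $\le2\lambda$ (only flexible players other than $i$ occur, since the rigid contributions cancel against $\tilde{\mu}_i$) gives $\E[(g_i(\veca)-\tilde{\mu}_i)^6]=O(n^3\lambda^6)$, so $\Psi=O(n^4\lambda^6)$ initially; once all flexible players are rounded $\Psi$ is deterministic and equals $\sum_i(g_i(\veca)-\tilde{\mu}_i)^6$, whence $|g_i(\veca)-\tilde{\mu}_i|=O(n^{2/3}\lambda)$ for every $i$. (The sixth power is the smallest that works: to pass from $\sum_i x_i^p=O(n^{p/2+1}\lambda^p)$ to $\max_i x_i=O(n^{2/3}\lambda)$ one needs $p/2+1\le 2p/3$, i.e.\ $p\ge6$.)

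Combining the three contributions, $\reg_i(\veca)\le\delta_i+|g_i(\veca)-\tilde{\mu}_i|+|\tilde{\mu}_i-\mu_i|\le\epsilon_0+O(n^{2/3}\lambda)+O(\lambda\sqrt n)=O(n^{2/3}\lambda)$ for all $i$ (using $\sqrt n\le n^{2/3}$), and carrying the constants through — the sixth-moment constant, and the optimal choice of $\epsilon_0$ — should give $\reg_i(\veca)\le\epsilon$ as soon as $\epsilon\ge\lambda\sqrt[3]{70n^2}$; every step above is polynomial time. The step I expect to be the crux is exactly this balancing act: the derandomization wants complete freedom in how each player is rounded, whereas keeping $\delta_i$ small pins down the near-pure players, and it is the rigid/flexible split together with re-centering the estimator at $\tilde{\mu}_i$ that reconciles the two. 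Using a sixth-moment (rather than an exponential) estimator is what simultaneously produces the $n^{2/3}\lambda$ error and keeps the rigid-player perturbation and the own-action term provably under control.
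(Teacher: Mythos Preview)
Your proposal is correct and takes a genuinely different route from the paper. Both arguments open with the same preprocessing: your rigid/flexible split is exactly the paper's Step~1 conversion of the $\tfrac{\lambda}{8}$-ANE to a $\Theta(\lambda\sqrt n)$-WSNE (same threshold, same ``pin the near-pure players to their best response'' move, same $O(\lambda\sqrt n)$ perturbation bound). The rounding of the remaining mixed players is where the two diverge. The paper tracks the deterministic quadratic potential $C(\vecp^{(i)})=\sum_{i'\in S^{(i)}}d_{i'}(\vecp^{(i)})^2$ evaluated at the current partially-rounded profile and rounds each player greedily so as to kill the linear term in $C$; this yields only $\sum_{i'}d_{i'}(\veca)^2\le 5\lambda^2n^2$, which does not by itself bound the maximum regret, so a third ``correction'' step is needed in which the at most $5\lambda^2n^2/\delta^2$ players with regret above $\delta$ switch to best response, and optimising $\delta$ gives the $\lambda\sqrt[3]{70n^2}$ bound. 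You instead run the method of conditional expectations on the sixth-moment potential $\Psi=\sum_i\E\big[(g_i(\veca)-\tilde{\mu}_i)^6\big]$; because the initial value is $O(n^4\lambda^6)$ by independence, the final deterministic $\Psi$ directly gives $\max_i|g_i(\veca)-\tilde{\mu}_i|=O(n^{2/3}\lambda)$ with no correction step. Your remark that $p=6$ is the smallest moment for which this works is precisely why the paper's quadratic potential cannot avoid Step~3. What the paper's approach buys is a cheaper per-step computation (linear rather than degree-$6$) and a formulation that extends to $m>2$ via variances of ``relevant'' action subsets (Section~\ref{sec:m}); what your approach buys is a cleaner two-step argument and, once the leading $(2,2,2)$ term $15n^3\lambda^6$ is tracked, a constant of roughly $15^{1/6}\approx 1.57$ in front of $n^{2/3}\lambda$, comfortably inside the stated $\sqrt[3]{70}$.
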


\begin{proof}
Throughout this proof we implicitly use the assumption that payoffs from mixed strategy profiles can be computed in polynomial time.  We will consider the (signed) discrepancy
\[d_i(\vecp)=u_i(2,\vecp_{-i})-u_i(1,\vecp_{-i})\]
($d_i$ is a $2\lambda$-Lipschitz function via the same argument as Remark \ref{rem:reg}).

Take some $\frac{\lambda}{8}$-ANE $\vecp^*$ of $G$.  The reduction is performed in three steps:

\begin{enumerate}[(1)]
    \item Constructing from $\vecp^*$ a well-supported Nash equilibrium $\vecp^{(0)}$.  This step is required because we will use a player's discrepancy as a linear proxy for regret, and general approximate equilibria may allow players to allocate small probabilities to bad actions, thus decoupling the two measures (in contrast with exact equilibria, in which any mixed strategy by definition implies no discrepancy between the actions).  Well-supported equilibria do not suffer from this same decoupling.
    \item Iteratively converting $\vecp^{(0)}$ into a pure action profile $\veca$ that is ``close'' to an equilibrium.  To achieve this step, we will bound the rate at which the sum of all players' discrepancies is allowed to increase, ensuring that few players experience large regret.
    \item Correcting $\veca$ to obtain an approximate pure equilibrium $\veca^*$.  Because of the guarantees from the previous step, we can achieve this by allowing every player with high regret to change to their best response.
\end{enumerate}
    
    \proofsubparagraph{\textbf{Step 1: Mixed equilibrium to WSNE}} In this first step we construct WSNE $\vecp^{(0)}$ from $\vecp^*$.  The idea here is to ensure that all players fall into one of the following two categories:
    \begin{itemize}
        \item Players with discrepancy that is small in absolute value
        \item Players playing pure strategies
    \end{itemize}
    In order to do this, consider any player $i$ such that $|d_i(\vecp^*)|>\frac{\lambda}{2}\sqrt{n}$ (without loss of generality we can assume the discrepancy is negative, i.e. action $1$ is significantly better).  Then
    \[\reg_i(\vecp^*)=-p^*_id_i>p^*_i\frac{\lambda}{2}\sqrt{n}\]
    so, since $\vecp^*$ is an approximate equilibrium,
    \[p^*_i<\frac{1}{4\sqrt{n}}.\]
    If every such player (at most $n$) changes to playing their pure best response, then every player's discrepancy will increase by at most $2n\lambda p^*_i$ (recall that discrepancy is $2\lambda$-Lipschitz), so in this new profile, no player playing a mixed strategy can experience discrepancy greater than
    \[\frac{\lambda}{2}\sqrt{n}+2n\lambda\frac{1}{4\sqrt{n}}=\lambda\sqrt{n}\]
    while no player playing a pure strategy can experience regret greater than
    \[0+2n\lambda\frac{1}{4\sqrt{n}}=\frac{\lambda}{2}\sqrt{n}.\]
    Thus this new profile is a $\lambda\sqrt{n}$-WSNE, and for any player $i$ playing a mixed strategy, $d_i(\vecp^*)\leq\lambda\sqrt{n}$.  Call this $\lambda\sqrt{n}$-WSNE $\vecp^{(0)}$.

\proofsubparagraph{\textbf{Step 2: WSNE to pure profile}}
In this step, we iteratively define intermediate strategy profiles $\vecp^{(i)}$ and intermediate sets of players $S^{(i)}$ who have, at any point, experienced small discrepancy (defined below to ensure that players not in $S^{(i)}$ cannot have high regret in profile $\vecp^{(i)}$).  Informally, for every player $i$, $\vecp^{(i)}$ is a strategy profile in which players $1,\ldots,i$ are playing pure strategies and players $i+1,\ldots,n$ may not be.  Furthermore, the set $S^{(i)}$ is the set of players who, in at least one of the profiles $\vecp^{(0)},\ldots,\vecp^{(i)}$ have experienced discrepancy at most $\lambda\sqrt{n}$.  These will be iteratively computed to ultimately arrive at $\veca=\vecp^{(n)}$: a pure action profile that is in some sense ``close'' to an approximate pure equilibrium.

Counterintutively, this will not always involve assigning each player to play their best response.  Instead, we will instruct each player to play the pure strategy that allows us to bound the effect on the other players.  Roughly, we identify below a measure of badness $C$ that is quadratic in the probability $p$ assigned by a player to strategy 2 as opposed to strategy 1. In setting $p$ to 0 or 1, we minimize the linear term in this measure, choosing 0 or 1 depending on the sign of the derivative with respect to $p$, and the measure increases by the relatively small second-order term.

    More formally, take $\vecp^{(0)}$ as above and define
    \[S^{(0)}=\{i':|d_{i'}(\vecp^{(0)})|\leq\lambda\sqrt{n}\}\]
    (i.e. a superset of the players playing mixed strategies in $\vecp^{(0)}$) while
    \[S^{(i)}=S^{(i-1)}\cup\{i':|d_{i'}(\vecp^{(i)})|\leq\lambda\sqrt{n}\}\]
    (adding at each step the set of players playing pure strategies in $\vecp^{(0)}$ but experiencing small discrepancy for the first time at step $i$; note that, due to the Lipschitz property, no player's discrepancy can change sign without that player joining $S^{(i)}$, so any player not in $S^{(i)}$ must be playing their pure best response).  Furthermore, define the sum of squared discrepancies cost function (which considers only the players in $S^{(i)}$):
    \[C(\vecp^{(i)})=\sum_{i'\in S^{(i)}}d_{i'}^2(\vecp^{(i)}).\]
    Note that $C$ only considers players in the set $S^{(i)}$, as it is meant to be a proxy for how far we are from equilibrium, and players in $[n]\setminus S^{(i)}$ have high discrepancies but are playing their best responses and hence should not contribute to this distance to equilibrium.
    
    It is difficult to minimize $C$ itself, and regardless $C$ is already only a proxy for the true objective we must minimize to achieve an approximate pure equilibrium.  However, since $d_{i'}$ is a multi-linear function (in fact, in the case of polymatrix games, linear), we can write
    \[d_{i'}(\vecp^{(i)})=c+\ell p_i\]
    where $c$ is some constant dependent only on $\vecp^{(i)}_{-i}$ and $\ell$ is a coefficient with absolute value at most $2\lambda$ (the Lipschitz parameter of $d_{i'}$).  Squaring this, and keeping in mind that $\vecp^{(i-1)}_{-i}=\vecp^{(i)}_{-i}$, we can expand (for every $i'\in S^{(i-1)}$):
    \[d_{i'}^2(\vecp^{(i)})-d_{i'}^2(\vecp^{(i-1)})=\alpha(p^{(i)}_i-p^{(i-1)}_i)+\ell^2((p^{(i)}_i)^2-(p^{(i-1)}_i)^2)\]
    for some value $\alpha_i=2c\ell$ dependent only on $\vecp^{(i)}_{-i}$.  Taking the sum over all players $i'\in S^{(i)}$, this yields
    \[C(\vecp^{(i)})-C(\vecp^{(i-1)})=A(p^{(i)}_i-p^{(i-1)}_i)+\Lambda((p^{(i)}_i)^2-(p^{(i-1)}_i)^2)+K_i\]
    where
    \begin{itemize}
        \item $A$ may be negative or non-negative.
        \item $\Lambda\leq4\lambda^2n$ (and obviously $(p^{(i)}_i)^2-(p^{(i-1)}_i)^2\leq1$).
        \item $K_i$ represents the one-time contribution of at most $\lambda^2n$ each from any player in $S^{(i)}\setminus S^{(i-1)}$ whose discrepancy will be counted in this and all future costs.  While we can't say anything about each individual $K_i$ (since we don't know when exactly every player's discrepancy will dip to this level), we note that there are at most $n$ such contributions of $\lambda^2n$, i.e:
        \[\sum_{i\in[n]}K_i\leq\lambda^2n^2\]
        (this bound is the reason we needed to start from a WSNE).
    \end{itemize}
    
    If $p^{(i-1)}_i\in\{0,1\}$, then set $\vecp^{(i)}=\vecp^{(i-1)}$ (if a player is already playing a pure strategy, we can avoid adding to $C$ at all on their turn).  Otherwise, since we can efficiently compute $A$ and decide whether it is negative or not, we can select $p^{(i)}_i$ such that $A(p^{(i)}_i-p^{(i-1)}_i)\leq0$ (one such option for $p^{(i)}_i\in\{0,1\}$ must exist).  This bounds
    \[C(\vecp^{(i)})-C(\vecp^{(i-1)})\leq4\lambda^2n+K_i\]
    and a quick induction achieves
    \[C(\vecp^{(n)})\leq4\lambda^2n^2+\sum_{i\in[n]}K_i\leq4\lambda^2n^2+\lambda^2n^2=5\lambda^2n^2.\]
    Furthermore, $\veca=\vecp^{(n)}$ is a pure action profile.
    
    \proofsubparagraph{\textbf{Step 3: Pure profile to pure equilibrium}}
    Here we distinguish between players in $S^{(n)}$ and players not in $S^{(n)}$.
    
    The players not in $S^{(n)}$ must be playing their best response in $\veca$.  This is because they were playing their best response in $\vecp^{(0)}$, and their discrepancy never approached $0$ or changed sign, so their best response must never have changed.  In particular, therefore, none of these players suffer any regret.
    
    We now consider the players in $S^{(n)}$.  Note that we have bounded the sum of the squares of the discrepancies of the players in this set to be at most $5\lambda^2n^2$.  We can now also bound the number of players with regrets that are too high and fix them.  To begin this final step, define $\delta=\lambda\sqrt[3]{20n^2}$ and consider the number of players $i\in S^{(n)}$ such that $d_i(\veca)>\delta$.  This will be at most $\frac{C(\veca)}{\delta^2}$.  So have every player with \emph{regret} at least $\delta$ in profile $\veca$ (a subset of those with high \emph{discrepancy}) simultaneously switch actions.  Then, once more invoking Remark \ref{rem:reg}, the maximum regret any player (in $S^{(n)}$ or otherwise) can experience is
    \[\delta+2\lambda \frac{C(\veca)}{\delta^2}=\delta+\frac{10\lambda^3n^2}{\delta^2}<\lambda\sqrt[3]{70n^2}.\]
    In other words, the action profile $\veca^*$ obtained after these players switch actions is a $\lambda\sqrt[3]{70n^2}$-PNE.  This completes the proof of Lemma \ref{lem:pure}.
\end{proof}

Finally, because a violation of the Lipschitz property can be uncovered in a quick check before commencing the above algorithm simply by ensuring all the coefficients are smaller than $\lambda$, and \cite{BJ12} show that $\PPAD$ is closed under Turing reductions, Theorem \ref{thm:contain} clearly follows from combining the two Lemmas above.

\subsection{Hardness for \PPAD}\label{sec:hard}

\subsubsection*{The Induced Population Game}

This section introduces the approach that will be used to show $\PPAD$-hardness.  The technique involves artificially converting a general game into a Lipschitz game by treating every player $i$ as instead a collection of many different players, each with a say in $i$'s ultimate strategy.  This reduction was used by \cite{AS13} in an alternative proof of Nash's Theorem, and by \cite{Bab13a} to upper bound the support size of $\epsilon$-ANEs.  More recently, \cite{GK21} used a query-efficient version of this reduction to lower-bound the query complexity of computing $\epsilon$-PNEs of general Lipschitz games.

\begin{definition}\label{def:gG}
    Given a game $G$ with payoff function $\vecu$, we define the {\em population game} induced by $G$, $G'=g_G(L)$ with payoff function $\vecu'$ in which every player $i$ is replaced by a population of $L$ players ($v^i_\ell$ for $\ell\in[L]$), each playing $G$ against the aggregate behavior of the other $n-1$ populations. More precisely, for $\vecp'$ a mixed profile of $G'$,
    \[u'_{v^i_\ell}(\vecp')=u_i\left(p'_{v^i_\ell},\vecp_{-i}\right)\]
    where
    \[p_{i'}=\frac{1}{L}\sum_{\ell=1}^Lp'_{v^{i'}_{\ell}}\]
    for all $i'\neq i$.
\end{definition}

\begin{remark}
    Note that, regardless of the Lipschitz parameter of $G$, the induced population game $G'=g_G(L)$ is $\frac{1}{L}$-Lipschitz.
\end{remark}

Population games date back even to Nash's thesis \cite{N50}, in which he uses them to justify the consideration of mixed equilibria as a solution concept.  To date, the reduction to the induced population game has been focused on proofs of existence.  We show that the reduction can also be used to obtain the first $\PPAD$-hardness result for a \emph{pure} Nash equilibrium problem for a class of non-Bayesian games with discrete action spaces.

\begin{remark}
    Note that any $\epsilon$-PNE of $g_G(L)$ induces a $\frac{1}{L}$-uniform\footnote{a $k$-uniform mixed strategy is one in which each action of each player is assigned a probability that is a discrete multiple of $k$} $\epsilon$-WSNE (and thus $\epsilon$-ANE) of $G$ in which each player in $G$ plays the aggregate behavior of their population.
\end{remark}

\subsubsection*{Proof of Hardness}

In Theorem \ref{thm:hard}, we only consider binary-action games (clearly the hardness results can be extended to games with more actions).

\begin{theorem}\label{thm:hard}
    There exists some constant $\epsilon>0$ such that $\PLPG{2}{\epsilon}{\epsilon n^{-4/5}}$ is $\PPAD$-hard.
\end{theorem}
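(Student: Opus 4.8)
The plan is to reduce from the $\PPAD$-hard problem of Theorem~\ref{thm:Rub18} --- finding an $\epsilon$-ANE of a binary-action polymatrix game, for a fixed constant $\epsilon>0$ --- via the induced population game of Definition~\ref{def:gG}. Given such an instance $G$, an $n$-player binary-action polymatrix game, I would construct $G'=g_G(L)$, replacing each player $i$ by a population $\{v^i_\ell\}_{\ell\in[L]}$, so that $G'$ has $N=nL$ players and, by the remark that $g_G(L)$ is $\tfrac1L$-Lipschitz irrespective of $G$'s Lipschitz parameter, is $\tfrac1L$-Lipschitz. The crucial choice is $L=\lceil n^4/\epsilon^5\rceil=\Theta(n^4)$: then $N=\Theta(n^5)$ is polynomial in $n$, and the inequality $\tfrac1L\le\epsilon N^{-4/5}$ holds, since $\tfrac1L\le\epsilon(nL)^{-4/5}$ is equivalent to $L^{1/5}\ge n^{4/5}/\epsilon$. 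Hence $G'$ is $\epsilon N^{-4/5}$-Lipschitz, so it is a legitimate instance of $\PLPG{2}{\epsilon}{\epsilon N^{-4/5}}$ (here $N$ plays the role of the ``$n$'' in the problem statement).

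I would then check that $G'$ is itself a \emph{polymatrix} game, so that it is genuinely an instance of the stated problem. The payoff of a population player $v^i_\ell$ is $u'_{v^i_\ell}(\vecp')=u_i(p'_{v^i_\ell},\vecp_{-i})$ with $p_{i'}=\tfrac1L\sum_m p'_{v^{i'}_m}$ for $i'\ne i$; since $u_i$ already decomposes into a sum of pairwise contributions (that is what polymatrix means) and each aggregate $p_{i'}$ is a uniform average, $u'_{v^i_\ell}$ likewise decomposes into a sum of pairwise contributions, one per player $v^{i'}_m$ with $i'\ne i$, each equal to the $(i,i')$-bimatrix of $G$ scaled by $\tfrac1L$, with population-mates not interacting at all; all payoffs remain in $[0,1]$ since they are payoff values of $G$, and the scaled coefficients give the $\tfrac1L$-Lipschitz bound. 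Because $G'$ is genuinely Lipschitz, it admits no valid ``not-$\lambda$-Lipschitz'' witness, so any solution produced by an oracle for $\PLPG{2}{\epsilon}{\epsilon N^{-4/5}}$ is an $\epsilon$-PNE $\veca'$ of $G'$. By the remark that every $\epsilon$-PNE of $g_G(L)$ induces a $\tfrac1L$-uniform $\epsilon$-WSNE of $G$, the profile of $G$ in which each original player $i$ plays the aggregate behaviour of its population $\{v^i_\ell\}_\ell$ is an $\epsilon$-WSNE, hence an $\epsilon$-ANE, of $G$; taking $\epsilon$ to be Rubinstein's constant, this is a solution to the source problem. Constructing $G'$ and computing the aggregates both take time polynomial in the size of $G$, so this is a polynomial-time reduction, establishing $\PPAD$-hardness.

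The construction is standard, so I expect the work to be verification rather than a conceptual obstacle. The points to be careful about are: (i) that the population game really does inherit both the polymatrix structure and the Lipschitz constant $\tfrac1L$; (ii) that $L$ can be chosen to push $\tfrac1L$ below the target value $\lambda(N)=\epsilon N^{-4/5}$ while keeping $N$ polynomial in $n$ --- this is where the exponent $4/5<1$ is used, and in fact any constant in $(0,1)$ would work here; and (iii) that the $\epsilon$-PNE of $G'$ maps back with \emph{no} loss in the approximation parameter, which holds because a population player's payoff is literally player $i$'s payoff against the aggregate of the other populations, so averaging over a population preserves the $\epsilon$ regret bound. Finally, since $\PPAD$-hardness is asymptotic, the finitely many small $n$ for which $G'$ might violate the pure-existence threshold $\lambda\le\epsilon/\sqrt{8N\log4mN}$ (so that the target problem could fail to be total) are irrelevant.
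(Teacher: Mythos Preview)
Your proposal is correct and follows essentially the same route as the paper: reduce from Rubinstein's constant-$\epsilon$ polymatrix hardness via the induced population game $g_G(L)$ with $L=n^4/\epsilon^5$, observe that the resulting $N$-player game is $\epsilon N^{-4/5}$-Lipschitz and still polymatrix, and pull back an $\epsilon$-PNE to an $\epsilon$-ANE of $G$. Your treatment is in fact slightly more careful than the paper's in verifying the polymatrix structure of $G'$ and the lossless back-mapping of the approximation parameter.
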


\begin{remark}
    While we show this Theorem for $\lambda=\epsilon n^{-4/5}$, the same proof holds for $\lambda=\epsilon n^{-\alpha}$ for any $\alpha\in(\frac{1}{2},1)$.  Note that if $\alpha\leq\frac{1}{2}$ the pure equilibrium guarantee of \cite{AS13} does not hold.
\end{remark}

\begin{proof}[Proof of Theorem \ref{thm:hard}]
    By Theorem \ref{thm:Rub18}, there is a constant $\epsilon>0$ such that it is $\PPAD$-hard to find $\epsilon$-ANEs of polymatrix games (we may assume $\frac{1}{\epsilon}\in\N$).  We reduce the problem of computing an $\epsilon$-ANE of polymatrix game $G$ to finding an $\epsilon$-PNE of Lipschitz polymatrix game $G'$ as follows.  Consider the induced population game $G'=g_G(L)$ for $L=n^4/\epsilon^5$.  This game has $N=\left(n/\epsilon\right)^5$ players and is $\epsilon N^{-4/5}$-Lipschitz.  Thus (for large enough $n$) it is guaranteed to have an $\epsilon$-PNE.  Furthermore, $G'$ is still a polymatrix game, as the payoff to any player $v_\ell^i$ is simply the sum of the payoffs from $N-L$ games played with the players $v_\ell^{i'}$ for $i'\neq i$.
    
    Now, because the payoffs of any pure action profile of $G'$ can be derived from payoffs of mixed strategy profiles of $G$ (which can be computed in polynomial time), this entire reduction occurs in polynomial time.  Thus there is a constant $\epsilon>0$ such that, for $\lambda(n)=\epsilon n^{-4/5}$, $\PLPG{2}{\epsilon}{\lambda}$ is $\PPAD$-hard.
\end{proof}

Combining Theorems \ref{thm:contain} and \ref{thm:hard} yields Theorem \ref{thm:consteps}.  In fact, by scaling the payoffs in Theorem \ref{thm:consteps} as described in Remark \ref{rem:scale}, we obtain:

\begin{corollary}\label{cor:funceps}
    Fix some constant $\alpha\in(\frac{2}{3},1)$.  For any constant $m\geq2$, non-increasing function $\epsilon(n)=\frac{1}{\poly(n)}$, and $\lambda(n)=\Theta(\epsilon n^{-\alpha})$, $\melPLPG$ is $\PPAD$-complete.
\end{corollary}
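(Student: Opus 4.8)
The plan is to deduce the corollary from Theorem~\ref{thm:consteps} by rescaling payoffs, establishing containment and hardness separately. For containment, I would invoke Theorem~\ref{thm:contain} directly, checking that the pair $(\epsilon(n),\lambda(n))$ meets its hypotheses. Since $\epsilon(n)=1/\poly(n)$ and $n^{-\alpha}\ge 1/n$, the product $\lambda(n)=\Theta(\epsilon(n)n^{-\alpha})$ is again of the form $1/\poly(n)$; moreover $\epsilon(n)/\lambda(n)=\Theta(n^{\alpha})$, and as $\alpha>\tfrac23$ this is $\omega(n^{2/3})$, i.e.\ $\epsilon(n)=\lambda(n)\,\omega(n^{2/3})$. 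Thus Theorem~\ref{thm:contain} yields $\PLPG{m}{\epsilon}{\lambda}\in\PPAD$ with no rescaling needed.

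For hardness, let $\epsilon_0=\epsilon_0(m,\alpha)>0$ be the constant provided by Theorem~\ref{thm:consteps} for the given $m$ and $\alpha$, so that $\PLPG{m}{\epsilon_0}{\lambda_0}$ is $\PPAD$-hard for \emph{every} function $\lambda_0(n)=\Theta(n^{-\alpha})$. Take $\lambda_0(n)=\epsilon_0\,\lambda(n)/\epsilon(n)$; since $\lambda(n)/\epsilon(n)=\Theta(n^{-\alpha})$ and $\epsilon_0$ is a constant, $\lambda_0(n)=\Theta(n^{-\alpha})$, so $\PLPG{m}{\epsilon_0}{\lambda_0}$ is $\PPAD$-hard. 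Now I would apply the rescaling of Remark~\ref{rem:scale} with factor $a(n)=\epsilon(n)/\epsilon_0$: multiplying every coefficient $\beta_{i_1,i_2,j_1,j_2}$ of an instance of $\PLPG{m}{\epsilon_0}{\lambda_0}$ by $a(n)$ produces an instance of $\PLPG{m}{a(n)\epsilon_0}{a(n)\lambda_0}=\PLPG{m}{\epsilon(n)}{\lambda(n)}$, which is valid because scaling by $a(n)\le 1$ keeps all total payoffs in $[0,1]$, and is polynomial-time computable because $\epsilon(n)=1/\poly(n)$. A solution to the rescaled instance maps back to a solution of the original: an $\epsilon(n)$-PNE of the scaled game is an $\epsilon_0$-PNE of the original, since every regret is scaled by exactly $a(n)$; and a pair of pure profiles witnessing a violation of $\lambda(n)$-Lipschitzness in the scaled game witnesses a violation of $\lambda_0(n)$-Lipschitzness in the original, since the relevant payoff gap, divided by $a(n)$, exceeds $\lambda(n)/a(n)=\lambda_0(n)$. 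Hence $\PLPG{m}{\epsilon}{\lambda}$ is $\PPAD$-hard, and combined with the containment above, $\melPLPG$ is $\PPAD$-complete.

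The step I would be most careful about is that the scaling factor $a(n)$ depends on $n$, whereas Remark~\ref{rem:scale} is phrased for a fixed constant $a\in(0,1)$. This is harmless because the reduction acts independently on each instance and $a(n)$ is efficiently computable; one just needs $a(n)\le 1$, which holds for all sufficiently large $n$, the finitely many smaller instances having bounded size and being solved outright. The only other point worth stating is that $\epsilon_0$ may depend on $m$ and $\alpha$ but on nothing else, which is exactly what the statement permits.
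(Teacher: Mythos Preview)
Your proposal is correct and matches the paper's approach: the paper derives the corollary in one line from Theorem~\ref{thm:consteps} by invoking the payoff-rescaling of Remark~\ref{rem:scale}, and your argument simply unpacks this, handling containment directly via Theorem~\ref{thm:contain} and hardness via the rescaling. The extra care you take with the $n$-dependent scaling factor and the finitely many small-$n$ instances is appropriate and fills in details the paper leaves implicit.
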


\section{Upper Bound for Additional Actions}\label{sec:m}

\begin{theorem}[Restatement of Theorem \ref{thm:contain}]
    For functions $\epsilon(n),\lambda(n)$ satisfying
    \[\lambda(n)=\frac{1}{\poly(n)},\qquad\epsilon(n)\geq6\lambda(n)\sqrt[3]{n^2m\log3m}\]
    $\PLPG{m}{\epsilon}{\lambda}\in\PPAD$.
\end{theorem}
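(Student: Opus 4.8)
The plan is to mirror the two-part structure of the proof of Theorem~\ref{thm:contain} for $m=2$. First, the $m$-action analogue of Lemma~\ref{lem:mixed}, namely $\LPG{m}{\lambda/8}{\lambda}\in\PPAD$ whenever $\lambda=1/\poly(n)$, is immediate: the payoff of any player in a polymatrix game is linear in the others' mixed strategies, so a mixed payoff query is answered in $O(n^2m^2)$ time, and the standard reduction placing approximate-Nash computation in $\PPAD$ applies verbatim. Second --- where essentially all the work lies --- one generalizes the Main Technical Lemma (Lemma~\ref{lem:pure}) to show that an $\epsilon$-PNE of an $n$-player, $m$-action, $\lambda$-Lipschitz polymatrix game is computable in polynomial time from a $\frac{\lambda}{8}$-ANE of it, for $\epsilon(n)\ge 6\lambda\sqrt[3]{n^2m\log 3m}$. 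Given both, the theorem follows exactly as before, after the (still easy) polynomial-time check that the coefficients $\beta_{i,i',j,j'}$ witness the $\lambda$-Lipschitz property, and using that $\PPAD$ is closed under Turing reductions~\cite{BJ12}.

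For the generalized Lemma~\ref{lem:pure} I would keep the three stages --- (1) pass from the $\frac{\lambda}{8}$-ANE to a well-supported equilibrium, (2) iteratively pin players to pure strategies, (3) let the remaining high-regret players switch to best responses --- but replace the scalar signed discrepancy $d_i(\vecp)=u_i(2,\vecp_{-i})-u_i(1,\vecp_{-i})$ by the vector of action payoffs $\bigl(u_i(j,\vecp_{-i})\bigr)_{j\in[m]}$, or equivalently a vector of gaps relative to a tracked reference. Each is still affine in every other player's strategy, and, because the game is polymatrix and $\lambda$-Lipschitz, the coefficients of any single player $i$'s distribution in such a gap lie in an interval of width at most $2\lambda$ (a difference of two per-action Lipschitz bounds), so after subtracting their mean the coefficient vector has $\ell_\infty$-norm at most $\lambda$. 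The cost function of stage~(2) becomes $C(\vecp^{(i)})=\sum_{i'\in S^{(i)}}\sum_{j}\bigl(\mathrm{gap}_{i',j}(\vecp^{(i)})\bigr)^2$ over a growing tracked set $S^{(i)}$; upon squaring, the part of a squared gap that is affine in the player currently being pinned contains a large but benign additive constant together with a linear term, and we cancel the summed linear term by sending that player to the vertex of the simplex minimising it --- such a vertex exists and can be taken in the support of the player's current mixed strategy, hence (by the WSNE property) is an approximate best response --- while the surviving second-order term is at most $\lambda^2$ per tracked pair $(i',j)$ and so $O(\lambda^2nm)$ per step. The one-time cost charged when a player enters $S^{(i)}$ is controlled using that in a WSNE \emph{every} action's payoff is within the WSNE parameter ($\Theta(\lambda\sqrt n)$ after stage~(1)) of the best, so each of its $m$ gaps is small. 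Summing over the $n$ players and $n$ steps gives $C(\vecp^{(n)})=O(\lambda^2n^2m\log 3m)$, the factor $m\log 3m$ replacing the constant of the $m=2$ case: the $m$ from the $\Theta(m)$ gaps per player, the $\log 3m$ from the threshold governing which actions stage~(1) clears (and from a union bound over actions in the stability step below).

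Stage~(3) then runs as for $m=2$: since $\sum_j(\mathrm{gap}_{i',j})^2$ dominates $\reg_{i'}^2$ up to a constant for any pinned player $i'$, at most $O\!\bigl(C(\vecp^{(n)})/\delta^2\bigr)$ tracked players have regret exceeding $\delta$, and having them switch to a best response raises every player's regret by at most $O\!\bigl(\lambda\,C(\vecp^{(n)})/\delta^2\bigr)$ by Remark~\ref{rem:reg}; choosing $\delta=\Theta\!\bigl(\lambda\sqrt[3]{n^2m\log 3m}\bigr)$ to balance the two terms yields a final regret below $6\lambda\sqrt[3]{n^2m\log 3m}\le\epsilon(n)$, i.e.\ an $\epsilon$-PNE, and the three stages are plainly polynomial-time.

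The main obstacle is stage~(2), specifically reconciling two requirements that coincide when $m=2$ but not in general. On the one hand a player left out of $S^{(i)}$ must provably be playing a \emph{stable} best response (so it contributes no regret and need not be pinned); for $m>2$ this is an argmax-stability statement and holds only because pinning one player perturbs each payoff by at most $\lambda$, far below the gap threshold $\Theta(\lambda\sqrt n)$ defining $S^{(i)}$. On the other hand the one-time cost of a player joining $S^{(i)}$ must be bounded \emph{at the profile where it joins}, not through the accumulated $\Theta(\lambda n)$ drift of its payoffs over the $n$ pinning steps (which would contribute an $n^3$-type term), forcing the entry condition to be phrased directly in terms of the quantity $\sum_j(\mathrm{gap}_{i',j})^2$ it is meant to bound. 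Choosing the definition of $S^{(i)}$, the reference used in each $\mathrm{gap}_{i',j}$, and the pinning rule to be mutually compatible --- so that linear-term cancellation, the approximate-best-response choice, the contribution bound, and the argmax-stability argument all go through at once --- is the delicate part; the remaining estimates are routine refinements of those in the $m=2$ proof.
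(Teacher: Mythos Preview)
Your high-level structure is right, and you have correctly located the genuine difficulty: for $m>2$ the two roles played by the set $S^{(i)}$ in the $m=2$ proof --- ``players whose squared discrepancy is small enough to enter the cost'' and ``players not in it are stably playing a best response'' --- decouple, and you explicitly leave unresolved how to reconcile them. That is precisely the missing idea, and the paper does not resolve it along the lines you sketch.

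The paper's Step~2 abandons the set-of-players bookkeeping entirely. Instead of a single tracked set $S^{(i)}\subseteq[n]$ and a sum of squared gaps $\sum_{i'\in S^{(i)}}\sum_j(\mathrm{gap}_{i',j})^2$, it maintains for \emph{each} player $i$ a growing set $S_i^{(i')}\subseteq[m]$ of \emph{relevant actions}, and the cost function is $\sum_i \sigma_i^2\bigl(S_i^{(i')},i'\bigr)$, the sum over players of the \emph{variance} of their payoffs restricted to their relevant actions. The relevant set is initialised to the $\epsilon_1$-best responses after Step~1 and is enlarged greedily after each pinning step by repeatedly absorbing the best remaining action as long as doing so raises the mean. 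This dissolves your obstacle: there is no longer any ``untracked player'' whose argmax must be shown stable --- every player is in the cost at all times, but only through its currently relevant actions --- and the ``one-time contribution'' is now charged per \emph{action} entering a relevant set, bounded at the moment of entry by the observation that a newly absorbed action's payoff lies within $2\lambda$ of the current mean (since it was irrelevant one step earlier). Summing these per-action charges gives the harmonic-type bound $\sum_{k=1}^{m-1}4k\lambda^2/(k+1)^2\le 4\lambda^2 H_{m-1}$, and \emph{this} is where the $\log 3m$ factor actually comes from, not from a Step~1 threshold or a union bound as you conjecture. The second-order term from pinning is handled exactly as you describe (choose the pure action minimising the summed linear term), yielding the $\bigl(\tfrac{m-1}{m}n\lambda\bigr)^2$ contribution, and Step~3 uses $\sigma_i^2\ge \delta_1^2/(2m)$ for any player with regret $\ge\delta_1$.

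In short: your plan is correct in outline but stops at the hard step; the paper's key move is to replace ``tracked players with vector-valued discrepancy'' by ``all players, each with a tracked variance over a growing action subset,'' which is what makes the three bounds (initial, action-absorption, purification) go through simultaneously.
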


Once again, the proof of this theorem is broken down further into two smaller steps:

\begin{lemma}[Extension of Lemma \ref{lem:mixed}]\label{lem:mppad}
    For any fixed $m\geq2,\lambda(n)=\frac{1}{\poly(n)}$,\newline $\LPG{m}{\left(\frac{m-1}{m}\right)^2\lambda}{\lambda}\in\PPAD$.
\end{lemma}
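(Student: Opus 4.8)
The plan is to prove this exactly as the $m$-action analogue of Lemma~\ref{lem:mixed}: the problem reduces, in the standard way (cf.\ Theorem~3.1 of \cite{DGP09}), to finding a discretized Brouwer fixed point, and the only things to check are that the ``not $\lambda$-Lipschitz'' alternative can be handled by a direct test and that expected payoffs remain efficiently computable when $m>2$.

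First I would, as in Section~\ref{sec:containment}, run the polynomial-time Lipschitz check on the game presented via its coefficients $\beta_{i_1,i_2,j_1,j_2}$: verify $|\beta_{i,i',j,j'_1}-\beta_{i,i',j,j'_2}|\leq\lambda$ over all $O(n^2m^3)$ index tuples (together with the $O(nm)$ validity inequalities that bound payoff sums to $[0,1]$). If a comparison fails, output the two pure profiles that agree everywhere except that player $i'$ plays $j'_1$ in one and $j'_2$ in the other while player $i$ plays $j$; these certify non-$\lambda$-Lipschitzness. So from now on $G$ may be assumed to be a valid $\lambda$-Lipschitz $m$-action polymatrix game.

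The key requirement for the \PPAD\ reduction, exactly as in Lemma~\ref{lem:mixed}, is a deterministic polynomial-time mixed-payoff oracle. For an $m$-action polymatrix game the expected payoff is still linear in each other player's mixed strategy, $u_i(j,\vecp_{-i})=\sum_{i'\neq i}\sum_{j'\in[m]}p_{i',j'}\,\beta_{i,i',j,j'}$, so $u_i(j,\vecp_{-i})$, $u_i(\vecp)$, $\br_i(\vecp)$, and $\reg_i(\vecp)$ are all computable exactly in $O(n^2m^2)$ arithmetic operations, which is $\poly(n)$ since $m$ is a fixed constant. Plugging this oracle into the standard reduction --- the Nash-improvement map on $\prod_{i\in[n]}\Delta([m])$, which is Lipschitz-continuous with a polynomially bounded constant, discretized onto a polynomially fine grid so that any grid point that the map moves by at most a suitable $1/\poly(n)$ amount is an $\epsilon$-ANE, with the search for such a point expressed as an \EotL\ instance --- gives a many-one reduction to \EotL\ for any target accuracy $\epsilon=1/\poly(n)$.

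It then remains only to observe that the stated accuracy lies in this regime: since $m\geq2$ is constant and $\lambda(n)=1/\poly(n)$, we have $\bigl(\tfrac{m-1}{m}\bigr)^2\lambda(n)\geq\tfrac14\lambda(n)=1/\poly(n)$, so the reduction applies with $\epsilon=\bigl(\tfrac{m-1}{m}\bigr)^2\lambda$. I do not expect a genuine obstacle here --- the work is the routine bookkeeping of choosing the grid granularity for the $(m-1)$-dimensional simplices so that fixed-point accuracy translates into the stated $\epsilon$. The particular constant $\bigl(\tfrac{m-1}{m}\bigr)^2$ is immaterial to \PPAD-membership; it is chosen solely to match the accuracy that the $m$-action analogue of Lemma~\ref{lem:pure} will consume downstream.
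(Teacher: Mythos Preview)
Your proposal is correct and follows essentially the same approach as the paper: the paper's entire proof of this lemma is the one-line remark that it is ``essentially the same as that of Lemma~\ref{lem:mixed},'' whose sketch in turn just invokes the standard \PPAD\ reduction of \cite{DGP09} together with the observation that mixed payoffs in polymatrix games are computable in $O(n^2m^2)$ time. Your write-up is in fact more explicit than the paper's, spelling out the Lipschitz/validity check (which the paper defers to the paragraph after Lemma~\ref{lem:pure}) and the reason $\bigl(\tfrac{m-1}{m}\bigr)^2\lambda$ is inverse polynomial.
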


The proof of Lemma \ref{lem:mppad} is essentially the same as that of Lemma \ref{lem:mixed}.

\begin{lemma}[Extension of Lemma \ref{lem:pure}]\label{lem:m-action}
    For functions $\epsilon(n),\lambda(n)$ satisfying
    \[\epsilon(n)\geq6\lambda(n)\sqrt[3]{n^2m\log3m}\]
    an $\epsilon(n)$-PNE of an $n$-player, $m$-action, $\lambda$-Lipschitz polymatrix game $G$ can be derived from an $\left(\frac{m-1}{m}\right)^2\lambda$-ANE of $G$ in polynomial time.
\end{lemma}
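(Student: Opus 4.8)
The plan is to mirror the three-step proof of Lemma~\ref{lem:pure}, replacing the scalar discrepancy $d_i$ by the object that plays its role when $m>2$. For a player $i$ and action $j$ write $\gamma_{i,j}(\vecp)=\max_{j'\in[m]}u_i(j',\vecp_{-i})-u_i(j,\vecp_{-i})\geq 0$ for the regret of action $j$; for fixed $\vecp_{-i}$ each $\gamma_{i,j}$ is affine in $p_i$, one has $\reg_i(\vecp)=\sum_j p_{i,j}\gamma_{i,j}(\vecp)$, and each $\gamma_{i,j}$ is $2\lambda$-Lipschitz in total-variation distance by Remark~\ref{rem:reg}. The potential replacing $\sum_{i'\in S}d_{i'}^2$ is $C(\vecp)=\sum_{i'\in S}\sum_{j\in[m]}\gamma_{i',j}^2(\vecp)$, again restricted to a monotonically growing set $S^{(i)}$ of players who have, at some prior profile, had all their action-regrets small (with an $m$-action analogue of the binary argument showing that a player outside $S^{(i)}$ must still be playing a fixed best response, so contributes nothing to the true distance-to-equilibrium).

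First I would carry out Step~1 (mixed ANE to WSNE): starting from a $\left(\frac{m-1}{m}\right)^2\lambda$-ANE $\vecp^*$, move, for each player $i$, all probability that $p^*_i$ places on actions with $\gamma_{i,j}(\vecp^*)>\tfrac{\lambda}{2}\sqrt{n}$ onto a best response. Since $\reg_i(\vecp^*)\geq\tfrac{\lambda}{2}\sqrt n\cdot(\text{mass moved})$, that mass is $O(1/\sqrt n)$ per player, so doing this for all $n$ players perturbs every $\gamma_{i',j}$ by $2\lambda\cdot n\cdot O(1/\sqrt n)=O(\lambda\sqrt n)$; the result $\vecp^{(0)}$ is an $O(\lambda\sqrt n)$-WSNE in which every supported action has regret $O(\lambda\sqrt n)$. (The calibration of the input approximation to $\left(\frac{m-1}{m}\right)^2\lambda$ is exactly what keeps the purified mass, and hence this threshold, independent of $m$.) Then Step~2 (WSNE to pure profile): process players $1,\dots,n$ in order; on player $i$'s turn, if $p^{(i-1)}_i$ is already a vertex of $\Delta([m])$ leave it, else replace it by a pure action $e_{j^*}$. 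Expanding $C$ and using that each $\gamma_{i',j}(\cdot)$ is affine in $p_i$ with slope $\leq 2\lambda$, the increment $C(\vecp^{(i)})-C(\vecp^{(i-1)})$ splits into a term linear in $p_i$, a non-negative second-order term, and a one-time contribution $K_i\geq0$ collecting squared action-regrets of players newly entering $S^{(i)}$. The linear term averages to $0$ over $j\sim p^{(i-1)}_i$, so some efficiently-findable vertex $e_{j^*}$ makes it $\leq0$; the second-order term is bounded by summing the squared slopes over the $\leq n$ players of $S^{(i)}$, the $m$ actions, and (realising the move as $O(\log m)$ pairwise collapses of the support) the $O(\log m)$ collapse steps, while $\sum_i K_i=O(\lambda^2 n^2 m\log m)$ since at most $n$ players enter and each brings squared-regret mass $O(m(\lambda\sqrt n)^2)$. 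A short induction then gives $C(\vecp^{(n)})=O(\lambda^2 n^2 m\log m)$, with $\veca=\vecp^{(n)}$ pure.

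Step~3 (pure profile to pure equilibrium) proceeds as in Lemma~\ref{lem:pure}: players never in $S^{(n)}$ have zero regret in $\veca$ (they kept a fixed best response throughout), and among the players in $S^{(n)}$ the number whose action has regret exceeding $\delta:=\Theta\!\big(\lambda\sqrt[3]{n^2 m\log m}\big)$ is at most $C(\veca)/\delta^2$; letting precisely these players simultaneously switch to a best response and invoking Remark~\ref{rem:reg} once more, every player's regret in the resulting profile $\veca^*$ is at most $\delta+2\lambda\cdot C(\veca)/\delta^2$, which the choice of $\delta$ makes at most $6\lambda\sqrt[3]{n^2 m\log 3m}$. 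Hence $\veca^*$ is the desired $\epsilon$-PNE, computed from the input ANE in polynomial time; combined with Lemma~\ref{lem:mppad} (the $m$-action copy of Lemma~\ref{lem:mixed}, using linearity of polymatrix payoffs), the polynomial-time check that all coefficients are $\leq\lambda$, and closure of \PPAD\ under Turing reductions, this yields the $m$-action form of Theorem~\ref{thm:contain}.

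I expect the main obstacle to be Step~2. For $m>2$ a player must be committed to a single pure action rather than to one of two sides of a binary split, and the squared-regret potential has to be kept under control through this $m$-way rounding; in particular, arranging that the accumulated second-order cost grows only like $m\log m$ (rather than, say, $m^2$) in the potential bound while the first-order term can still be driven non-positive at a deterministically chosen vertex is the delicate point, and it is what pins down the exact $\sqrt[3]{n^2 m\log 3m}$ form of the approximation guarantee.
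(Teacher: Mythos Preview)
Your Step~2 has a genuine gap. You assert that ``each $\gamma_{i',j}(\cdot)$ is affine in $p_i$ with slope $\leq 2\lambda$'' and then split $C(\vecp^{(i)})-C(\vecp^{(i-1)})$ into a linear term, a second-order term, and $K_i$. But $\gamma_{i',j}(\vecp)=\max_{j'}u_{i'}(j',\vecp_{-i'})-u_{i'}(j,\vecp_{-i'})$ is \emph{not} affine in $p_i$ for $i\neq i'$: while $u_{i'}(j,\vecp_{-i'})$ is affine in $p_i$ (polymatrix), the term $\max_{j'}u_{i'}(j',\vecp_{-i'})$ is a maximum of $m$ affine functions and is only piecewise affine. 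As $p_i$ moves from the mixed $\vecp_i^{(i-1)}$ to a vertex $e_{j^*}$, the identity of player $i'$'s best response can change, so the ``linear term'' you need for the averaging argument (``the linear term averages to $0$ over $j\sim p_i^{(i-1)}$'') is not well defined, and the quadratic decomposition collapses. In the binary case of Lemma~\ref{lem:pure} this issue does not arise because the scalar discrepancy $d_{i'}=u_{i'}(2,\cdot)-u_{i'}(1,\cdot)$ is a difference of payoffs with no max.

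This is exactly why the paper's proof of Lemma~\ref{lem:m-action} does \emph{not} use squared action-regrets. Instead it replaces the max by a \emph{mean}: for each player $i$ it maintains a monotonically growing ``relevant set'' $S_i^{(i')}\subseteq[m]$ of actions and takes as potential $\sum_i\sigma_i^2(S_i^{(i')},i')$, the sum over players of the variance of $i$'s payoff vector restricted to $S_i^{(i')}$. Because the mean $\mu_i(S,\cdot)$ is affine in every other player's strategy, the centered vector $\vecu_i-\mu_i\veco$ is affine in $p_{i'}$, and the variance decomposes cleanly as a constant plus a term linear in $p_{i'}$ plus a bounded quadratic, which is what makes the vertex-selection argument go through. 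The $\log m$ factor then enters not via ``$O(\log m)$ pairwise collapses'' of the support but via a harmonic-sum bound on the variance increase when actions are added one by one to the relevant sets. Your growing set $S^{(i)}$ of \emph{players} (those whose action-regrets are all small) also does not do the right job here: in an $m$-action game a player at exact equilibrium can have an action with regret close to $1$, so ``all action-regrets small'' is not the correct membership criterion; the paper instead tracks which \emph{actions} are relevant per player.
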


\newpage

\begin{proof}
    We will use the following values in this proof.  Let
    \[\epsilon_0=\left(\frac{m-1}{m}\right)^2\lambda\]
    \[\epsilon_1=2\sqrt{2n\lambda\epsilon_0}.\]
    Then, in order to prove Lemma \ref{lem:m-action}, we need to convert an $\epsilon_0$-ANE to an $\epsilon$-PNE.  We will do so in the same three steps as in the proof of Lemma \ref{lem:pure}, but more care must be taken when generalizing the proof.
    
    The primary generalization required is the transition from considering discrepancy to considering variance.  Whereas in Lemma \ref{lem:pure} there was a well-defined notion of discrepancy that, when squared, could be expressed as quadratic in the probabilities assigned to the actions of a given player, such a concept does not easily translate to games with more actions.  In order to rely on similar techniques, we will instead consider the variance of a subset of each player's actions.  The exact subset to be considered will be described below, and throughout this proof will be referred to as the ``relevant'' set, as it will contain the set of actions that may become relevant in our analysis of the algorithm.
    
    We will proceed along the same lines as the proof of Lemma \ref{lem:pure} with the modified approach.
    
    \proofsubparagraph{\textbf{Step 1: Mixed equilibrium to WSNE}} We begin with an $\epsilon_0$-ANE $\vecp^*$.  Consider any player $i$ and assume that player $i$ puts probability $p$ on actions with regret greater than
    \[\delta_0=\sqrt{2(n-1)\lambda\epsilon_0}.\]
    The regret of this player is at least $p\delta_0$, so it must be the case that $p\leq\epsilon_0/\delta_0$.
    
    This being the case, if every player simultaneously moves the probability from such actions to their best response, then any given player's worst-case regret becomes
    \[\delta_0+2(n-1)\lambda p\leq\delta_0+\frac{2(n-1)\lambda\epsilon_0}{\delta_0}\leq2\sqrt{2n\lambda\epsilon_0}=\epsilon_1.\]
    Thus an $\epsilon_0$-ANE can be converted easily to an $\epsilon_1$-WSNE.
    \proofsubparagraph{\textbf{Step 2: WSNE to pure profile}} This is, once again, the most involved step.  Unlike Step 1 (which occurred simultaneously) and Step 3 (which will do the same), this step will take place for each player in turn.  We will iteratively define intermediate strategy profiles $\vecp^{(i')}$ in which players $1,\ldots,i'$ are playing pure strategies and players $i'+1,\ldots,n$ may not be.  Ultimately, we will arrive at $\veca=\vecp^{(n)}$: a pure action profile that is in some sense ``close'' to an approximate pure equilibrium.
    
    Counterintuitively, this will not always involve instructing each player to play their best response when their turn arises.  Instead, we will instruct each player to play the pure strategy that allows us to bound its effect on the other players.  Roughly, we identify a measure of badness that is quadratic in the probability assigned by a player to each action.  In selecting a pure action, we minimize the linear term in this measure depending on the direction of the gradient, and the measure increases by the relatively small second-order term.  The quadratic measure we select will be the sum of the variances of subsets of the players' actions.
    
    More specifically, we define the following concepts:
    
    {\ifisarxiv\centering\fi\begin{tabular}{c|c}
        Notation & Definition \\\hline
        $S_i^{(i')}$ & \makecell{Player $i$'s \textbf{relevant set} of actions\\ (defined below) after step $i'$}\\\hline
        $m_i^{(i')}$ & \makecell{$\left|S_i^{(i')}\right|$, i.e. the number of relevant\\ actions player $i$ has after step $i'$}\\\hline
        $\vecu_i\left(S,t\right)$ & \makecell{the length-$|S|$ restriction of player $i$'s payoff\\ vector after step $t$ to the actions in set $S$}\\\hline
        $\mu_i\left(S,t\right)$ & \makecell{the average value of $\vecu_i\left(S,t\right)$, i.e. the payoff to\\ player $i$ after step $t$ for playing the uniform mixed\\ strategy over all actions in player set $S$}\\\hline
        $\sigma^2_i\left(S,t\right)$ & the variance of the values in $\vecu_i\left(S,t\right)$\\\hline
        $\reg_i\left(j,t\right)$ & \makecell{the regret player $i$ experiences\\ for playing action $j$ after step $t$}
    \end{tabular}\vspace{5mm}\\}
    Now, we will define the relevant set via the following recurrence relation.  The base case will be:
    \[S_i^{(0)}=\{j\mid\reg_i(j,0)\leq\epsilon_1\}.\]
    The recurrence is:
    \begin{lstlisting}[backgroundcolor=\color{white}]
  $S_i^{(i'+1)}=S_i^{(i')}$
  $\mu=\mu_i(S_i^{(i'+1)},i'+1)$
  while($\max_{j\not\in S_i^{(i'+1)}}u_i(j,i'+1)\geq\mu$):
      $S_i^{(i'+1)}=S_i^{(i'+1)}\cup\{\argmax_{j\not\in S_i^{(i'+1)}}u_i(j,i'+1)\}$
      $\mu=\mu_i(S_i^{(i'+1)},i'+1)$
    \end{lstlisting}
    In other words, $S_i^{(i'+1)}$ is the superset of $S_i^{(i')}$ obtained by repeatedly adding player $i$'s highest-paying action to $S_i^{(i'+1)}$ and recalculating the value of $\mu_i(S_i^{(i'+1)},i'+1)$ until doing so would decrease its value.
    We call $S_i^{(i')}$ the relevant set because it consists of the actions that, at any point during the algorithm's run, are worth consideration in our attempt to bound the total regret experienced by all players.
    
    \textbf{Step 2 of the algorithm will thus proceed as follows.}  During step $i'$, player $i'$ will first change their strategy to the pure strategy that has the least impact on the linear term of the sum of the variances of all other players' relevant sets.  \emph{Then} every player will recalculate their relevant sets using the new payoffs considering the changed strategy of player $i'$.
    
    There are three values we need to bound here:
    \begin{enumerate}
        \item The sum of the variances of $S_i^{(0)}$, the players' initial relevant sets:
        \[\sum_{i=1}^n\sigma_i^2\left(S_i^{(0)},0\right).\]
        \item How much larger the sum of the variances of $S_i^{(i'+1)}$ can be than the sum of the variances of $S_i^{(i')}$ (between steps), adding to a total over all $n$ steps of:
        \[\sum_{i=1}^n\sum_{i'=1}^n\sigma_i^2\left(S_i^{(i')},i'\right)-\sigma_i^2\left(S_i^{(i'-1)},i'\right).\]
        \item How much the sum of the variances of $S_i^{(i')},$ can increase during step $i'$, adding to a total over all $n$ steps of:
        \[\sum_{i=1}^n\sum_{i'=1}^n\sigma_i^2\left(S_i^{(i'-1)},i'\right)-\sigma_i^2\left(S_i^{(i'-1)},i'-1\right).\]
    \end{enumerate}
    Note that the sum of these three values is
    \[\sum_{i=1}^n\sigma_i^2\left(S_i^{(n)},n\right)\]
    i.e. the sum of the variances of the relevant sets at the end of the entire Step 2.
    
    We begin with quantity (1).  The variance of any set of points within the interval $[a,b]$ is at most $(b-a)^2/4$ (see, for example, \cite{BT02}), so quantity (1) is at most
    \[n\frac{\epsilon_1^2}{4}=2n^2\lambda\epsilon_0=2\left(\frac{n\lambda(m-1)}{m}\right)^2.\]
    Next, we bound quantity (2).  Quantity (2) considers no change in the strategy profile - all payoffs considered are based on strategy $\vecp^{(i')}$.  Then, for every player, actions are added to the set $S_i^{(i')}$ one by one (each time increasing the average value of the payoffs) to form the set $S_i^{(i'+1)}$.  The critical observation is that, in addition to providing a lower bound (the current average) on the payoff of this newly added action, we can also ensure an upper bound of $\mu_i\left(S_i^{(i')},i'\right)+2\lambda$, as the action must not have been relevant in the previous step.  So specifically, consider a set of $k$ points $\{x_1,\ldots,x_k\}$ with mean $\mu$ and variance $\sigma^2$ and consider adding a point $x_{k+1}$ that is within an additive $2\lambda$ of $\mu$.  We calculate the change in variance below, keeping in mind that if we subtract $\mu$ from every point the variance remains the same:
    \begin{align*}
        &\Var(x_1,\ldots,x_{k+1})-\sigma^2\\
        &=\Var(x_1-\mu,\ldots,x_{k+1}-\mu)-\sigma^2\\
        &=\left(\frac{1}{k+1}\sum_{j=1}^{k+1}(x_j-\mu)^2\right)-\left(\frac{1}{k+1}\left(\sum_{j=1}^k(x_j-\mu)\right)^2\right)-\sigma^2\\
        &=\left(\frac{1}{k+1}\left((x_{k+1}-\mu)^2+\sum_{j=1}^{k}(x_j-\mu)^2\right)\right)\\
        &\qquad-\left(\frac{1}{k+1}\left((x_{k+1}-\mu)+\sum_{j=1}^k(x_j-\mu)\right)^2\right)-\sigma^2\\
        &=\frac{(x_{k+1}-\mu)^2}{k+1}+\frac{k}{k+1}\sigma^2-\frac{(x_{k+1}-\mu)^2}{(k+1)^2}-\sigma^2\\
        &=\frac{1}{k+1}\left(\frac{k}{k+1}(x_{k+1}-\mu)^2-\sigma^2\right)
    \end{align*}
    where we can go from step 4 to step 5 using the fact that
    \[\sum_{j=1}^k(x_j-\mu)=0.\]
    So, since $|x_{k+1}-\mu|\leq2\lambda$, the largest possible increase in variance is
    \[\frac{4k\lambda^2}{(k+1)^2}\]
    (and, in fact, adding a point will usually \emph{decrease} the variance, but there's no need to get that specific because this bounded increase will serve our purpose for constant values of $m$).
    If we add this contribution for the maximum possible $m-1$ actions (the relevant set will never start empty) and all $n$ players, we see
    \[\sum_{i=1}^n\sum_{k=1}^{m-1}\frac{4k\lambda^2}{(k+1)^2}\leq n\sum_{k=1}^{m-1}\frac{4\lambda^2}{k}=4n\lambda^2H_{m-1}<4n\lambda^2(\log(m-1)+1).\]
    Finally, we need to bound quantity (3).  Note that, given $\vecu_i\left(S_i^{(i'-1)},i'-1\right)$, the quantity $\vecu_i\left(S_i^{(i'-1)},i'\right)$ and its mean and variance are all functions of the strategy vector $\vecp_{i'}$ of player $i'$.  In particular, we can write:
    \[\vecu_i(S_i^{(i'-1)},i')-\veco\mu_i(S_i^{(i'-1)},i')=\vecc_i^{(i')}+\matL_i^{(i')}\vecp_{i'}^{(i')}\]
    for some constant $m_i^{(i'-1)}\times1$ vector $\vecc_i^{(i')}$ (which may depend on the known values\newline$\vecu_i\left(S_i^{(i'-1)},i'-1\right)$) and $m_i^{(i'-1)}\times m$ coefficient matrix $\matL_i^{(i')}$ in which each element has magnitude no more than $\frac{m-1}{m}\lambda$.  Propagating this expression:
    \begin{align}
        \sigma^2_i(S_i^{(i'-1)},i')&=\frac{1}{m_i^{(i'-1)}}\left|\left|\vecu_i(S_i^{(i'-1)},i')-\veco\mu_i(S_i^{(i'-1)},i')\right|\right|_2^2\nonumber\\
        &=\frac{1}{m_i^{(i'-1)}}\left(\left|\left|\vecc_i^{(i')}\right|\right|_2^2+\left|\left|\matL_i^{(i')}\vecp_{i'}^{(i')}\right|\right|_2^2+2\vecc_i^{(i')}\cdot\matL_i^{(i')}\vecp_{i'}^{(i')}\right).\label{eq:var}
    \end{align}
    If we specifically consider the mixed term:
    \begin{align}
        2\vecc_i^{(i')}\cdot\matL_i^{(i')}\vecp_{i'}^{(i')}&=2\vecc_i^{(i)'\intercal}\left(\matL_i^{(i')}\vecp_{i'}^{(i')}\right)\nonumber\\
        &=2\left(\vecc_i^{(i')\intercal}\matL_i^{(i')}\right)\vecp_{i'}^{(i')}\nonumber\\
        &=\vecb_i^{(i')\intercal}\vecp_{i'}^{(i')}\label{eq:linvar}
    \end{align}
    for some vector $\vecb_i^{(i')}\in\R^m$.  So now we can consider the quantity we're really after.  We focus on (for some $i'\in[n]$) the sum over all players $i\in[n]$ of the changes in $\sigma_i^2\left(S_i^{(i'-1)},\cdot\right)$ after step $i'$:
    \begin{equation}\label{eq:vardiff}
        \sum_{i=1}^n\sigma_i^2\left(S_i^{(i'-1)},i'\right)-\sigma_i^2\left(S_i^{(i'-1)},i'-1\right).
    \end{equation}
    
    We want to minimize this value as best we can, as our goal is to end up with small variance among the set of relevant actions.  While the variance is not a very good proxy for the regret (see Equation \ref{eq:varlb} below), it is a value that we will be able to successfully bound.  So, substituting Equations \ref{eq:var} and \ref{eq:linvar} into Equation \ref{eq:vardiff}, we see that the change in this value across step $i'$ is
    \begin{align}
        &\sum_{i=1}^n\frac{1}{m_i^{(i'-1)}}\bigg(\left|\left|\vecc_i^{(i')}\right|\right|_2^2+\left|\left|\matL_i^{(i')}\vecp_{i'}^{(i')}\right|\right|_2^2+\vecb_i^{(i')^\intercal}\vecp_{i'}^{(i')}-\left|\left|\vecc_i^{(i')}\right|\right|_2^2+\left|\left|\matL_i^{(i')}\vecp_{i'}^{(i'-1)}\right|\right|_2^2+\vecb_i^{(i')^\intercal}\vecp_{i'}^{(i'-1)}\bigg)\nonumber\\
        &=\sum_{i=1}^n\frac{1}{m_i^{(i'-1)}}\left(\left|\left|\matL_i^{(i')}\vecp_{i'}^{(i')}\right|\right|_2^2+\vecb_i^{(i')^\intercal}\vecp_{i'}^{(i')}-\left|\left|\matL_i^{(i')}\vecp_{i'}^{(i'-1)}\right|\right|_2^2+\vecb_i^{(i')^\intercal}\vecp_{i'}^{(i'-1)}\right)\nonumber\\
        &=\sum_{i=1}^n\frac{1}{m_i^{(i'-1)}}\left(\left|\left|\matL_i^{(i')}\vecp_{i'}^{(i')}\right|\right|_2^2-\left|\left|\matL_i^{(i')}\vecp_{i'}^{(i'-1)}\right|\right|_2^2+\vecb_i^{(i')^\intercal}\left(\vecp_{i'}^{(i')}-\vecp_{i'}^{(i'-1)}\right)\right)\nonumber\\
        &\leq\sum_{i=1}^n\frac{1}{m_i^{(i'-1)}}\left(\left(\frac{m-1}{m}\lambda\right)^2+\vecb_i^{(i')^\intercal}\left(\vecp_{i'}^{(i')}-\vecp_{i'}^{(i'-1)}\right)\right)\nonumber\\
        &=\vecb^{(i')^\intercal}\left(\vecp_{i'}^{(i')}-\vecp_{i'}^{(i'-1)}\right)+\sum_{i=1}^n\frac{1}{m_i^{(i'-1)}}\left(\frac{m-1}{m}\lambda\right)^2\label{eq:vardiffupper}
    \end{align}
    for some vector $\vecb^{(i')}\in\R^m$ (the sum of the vectors $\frac{1}{m_i^{(i'-1)}}\vecb_i^{(i')}$).  So, ultimately, our algorithm aims to minimize this upper bound by ensuring that the first term is not positive.  Recall that $\vecp_{i'}^{(i'-1)}$ is a (given) vector of non-negative real numbers with $\left|\left|\vecp_{i'}^{(i'-1)}\right|\right|_1=1$, while $\vecp_{i'}^{(i')}$ is a (desired) vector containing $m-1$ zeros and a single one.  Consider the element $b$ (at index $j$) of $\vecb^{(i')}$ with the smallest value \emph{among those indices in the relevant set} $S_{i'}^{(i')}$ (note that player $i'$'s entire support will be contained in $S_{i'}^{(i')}$).  Set $\vecb=\vece_j$.  This ensures that the first term in Equation \ref{eq:vardiffupper} is not positive.
    
    So, proceeding iteratively as described, this ensures
    \[\sum_{i=1}^n\sum_{i'=1}^n\sigma_i^2\left(S_i^{(i'-1)},i'\right)-\sigma_i^2\left(S_i^{(i'-1)},i'-1\right)\leq\sum_{i=1}^n\sum_{i'=1}^n\frac{1}{m_i^{(i'-1)}}\left(\frac{m-1}{m}\lambda\right)^2\leq\left(\frac{m-1}{m}n\lambda\right)^2\]
    
    We can now bound:
    \begin{itemize}
        \item The maximum initial sum of variances of all players at $2\left(\frac{n\lambda(m-1)}{m}\right)^2$
        \item The maximum contribution to the sum of variances of all players from adding actions to the relevant set at $4n\lambda^2(\log(m-1)+1)$
        \item The maximum contribution to the sum of variances of all players from purifying the equilibrium at $\left(\frac{m-1}{m}n\lambda\right)^2$
    \end{itemize}
    Adding these together, the maximum sum of variances of actions in the relevant sets of players in $\vecp^{(n)}$ is
    \begin{align*}
    &\left(\frac{m-1}{m}n\lambda\right)^2+4n\lambda^2(\log(m-1)+1)+2\left(\frac{n\lambda(m-1)}{m}\right)^2\\
    &=3\left(\frac{m-1}{m}n\lambda\right)^2+4n\lambda^2\left(\log(m-1)+1)\right)\\
    &< 8n^2\lambda^2\log3m
    \end{align*}
    for $m>2$.
    
    \proofsubparagraph{\textbf{Step 3: Pure profile to pure equilibrium}}
    Finally, for any value of $\delta_1$, if any player has regret at least $\delta_1$, the variance of their relevant actions must be at least
    \begin{equation}\label{eq:varlb}
        \frac{\delta_1^2}{2m}
    \end{equation}
    (there must be at least a pair of relevant actions that are a distance of $\delta_1$ apart to achieve this regret, so the lowest-variance option is realized when all remaining actions yield regret $\delta_1/2$).
    So allow every player with regret greater than $\delta_1$ to instead change to playing their best response (there will be at most
    \[\frac{16n^2\lambda^2m\log3m}{\delta_1^2}\]
    such players).  As each of these players can add at most $2\lambda$ to the regret of any other player (the payoff of the best response could increase by $\lambda$ while that of the chosen action could decrease by $\lambda$), the maximum regret any player can experience is
    \[\delta_1+2\lambda\frac{16n^2\lambda^2m\log3m}{\delta_1^2}.\]
    Optimizing for $\delta_1$, this value is at most
    \[6\lambda\sqrt[3]{n^2m\log3m}.\]
\end{proof}

\section{Discussion}

Our \PPAD-hardness result is rather strongly negative, since Lipschitz polymatrix games are quite a restricted subset of either Lipschitz or polymatrix games in general. There may be scope to broaden the $(\epsilon,\lambda)$ values for which Lipschitz games are hard to solve.  On the other hand, there are more positive results for computing approximate Nash equilibria of Lipschitz games, with further scope for progress.

In particular, Theorem \ref{thm:contain} is unable to place the problem of finding $\epsilon$-PNEs of $\lambda$-Lipschitz polymatrix games in $\PPAD$ for the total range of values guaranteed by \cite{AS13} (while they guarantee existence for $\epsilon\geq\lambda\Omega(\sqrt{n\log n})$, the above successfully computes equilibria when $\epsilon\geq\lambda\Omega\left(n^{2/3}\right)$).  It would be of future interest to determine if this is a characteristic of our choice of algorithm/analysis, or an indication of a barrier to the complete derandomization of computing these equilibria.

Furthermore, it may even be the case that Lipschitz \emph{polymatrix} games, like anonymous Lipschitz games (\cite{DP15}), guarantee pure equilibria for a smaller value of $\epsilon$.  It would be interesting to determine how close to a complete derandomization we are able to achieve.  There are obvious obstacles to finding a deterministic lower bound asymptotically stronger than that of \cite{AS13}, as such a discovery would resolve the $\PP$ vs. $\BPP$ question.

\bibliography{mybibliography}

\end{document}